%% file: main.tex
\documentclass[12pt]{article} 

\usepackage[margin=1in]{geometry} 

\usepackage[sectionbib]{natbib}
\usepackage{array,epsfig,fancyheadings,rotating}
\usepackage[]{hyperref}  

\usepackage{titlesec}
\usepackage{sectsty, secdot}
\usepackage{setspace}

\titlespacing*{\section}    {0pt}{1.4ex plus .2ex minus .2ex}{0.8ex plus .2ex}
\titlespacing*{\subsection} {0pt}{1.2ex plus .2ex minus .2ex}{0.6ex plus .2ex}
\titlespacing*{\subsubsection}{0pt}{1.0ex plus .2ex minus .2ex}{0.5ex plus .2ex}

\usepackage{amsmath}
\usepackage{amssymb}
\usepackage{amsfonts}
\usepackage{multirow}
\usepackage{amsthm}

\newtheorem{theorem}{Theorem}
\newtheorem{lemma}{Lemma}

\newtheorem{proposition}{Proposition}
\theoremstyle{definition}

\usepackage{booktabs}

\newtheorem{assumption}{Assumption}
\usepackage{mathtools}
\usepackage{bm}
\usepackage{multirow}
\usepackage{float}
\usepackage{amsmath}
\usepackage{amssymb}
\usepackage{amsfonts}
\usepackage{multirow}
\usepackage{amsthm}
\usepackage{caption}

\usepackage{enumitem}

\input{notations}

\usepackage{xcolor}
\definecolor{vz}{rgb}{0.0, 0.0, 0.7}

\allowdisplaybreaks

\DeclareCaptionFont{tablecaptionstretch}{\setstretch{1.2}}
\usepackage{xr}
\begin{document}


\renewcommand{\baselinestretch}{1}
\renewcommand{\arraystretch}{1.5}
\captionsetup[table]{font=tablecaptionstretch}
%


\title{\bf Longitudinal Adaptive Experimental Design for Learning Multiple Target Estimands with Semiparametric Efficient Inference}
  \author{Wenxin Zhang, Mark van der Laan \thanks{
    \looseness -1
    This work was supported by the National Institutes of Health under Grant R01AI074345. The content is solely the responsibility of the authors and does not necessarily represent the official views of the NIH. 
    } \\
    Division of Biostatistics, University of California, Berkeley}
    \maketitle
\begin{abstract}
Adaptive designs are increasingly used in clinical trials and digital experiments to improve estimation efficiency by updating treatment randomization probabilities as data accumulate. While most existing work focuses on settings with a single-stage treatment, adaptive designs for longitudinal studies with multi-stage, time-varying treatments remain relatively underexplored. In this work, we develop a general semiparametric efficiency framework for designing longitudinal adaptive experiments to optimize the estimation efficiency of a broad class of target estimands of interest. An efficiency-oriented design criterion is proposed to accommodate both single-estimand targets and joint optimization across multiple estimands. We demonstrate that optimal randomization at earlier stages depends on later-stage allocations, yielding a backward-recursive strategy for deriving the oracle design, and propose a longitudinal adaptive design to sequentially learn and target the oracle design using accumulating data. 
We further develop an adaptive-design-likelihood-based longitudinal targeted maximum likelihood estimator (ADL-LTMLE) for asymptotically
normal and semiparametric efficient estimation of statistical estimands from dependent data collected from adaptive experiments, without relying on parametric model assumptions.
Applying the framework to time-to-treatment-initiation effects that compare initiating treatment at a given stage with delaying initiation until a subsequent stage, we show that designs optimized for a particular stage-specific effect can substantially compromise estimation efficiency for effects defined at other stages, highlighting the design trade-offs addressed by our framework. Simulation studies show the proposed design and estimation approaches achieve substantial variance reductions relative to non-adaptive designs, with performance close to that of the oracle design.

\vspace{9pt}
\noindent {\it Key words and phrases: covariate-adjusted response-adaptive design, longitudinal data, semiparametric efficiency, targeted maximum likelihood estimation}
\par
\end{abstract}\par

\def\thefigure{\arabic{figure}}
\def\thetable{\arabic{table}}

\fontsize{12}{14pt plus.8pt minus .6pt}\selectfont


\setstretch{1.7}
\section{Introduction}

Adaptive designs are increasingly used in clinical trials and digital experiments \citep{chow2008adaptive,robertson2023response,larsen2024statistical}, allowing treatment randomization probabilities to adapt in response to accumulating data.
Dating back to \cite{thompson1933likelihood},
a central class of such methods is Response-Adaptive Randomization (RAR), in which treatment assignment probabilities are updated using treatment and outcome information from previously enrolled participants, with objectives ranging from improving efficiency to reducing assignment to inferior treatments \citep{wei1978randomized,hu2004asymptotic,hu2006theory,robertson2023response}. Covariate-Adjusted Response-Adaptive (CARA) designs generalize RAR by incorporating baseline covariates into the randomization mechanism \citep{rosenberger2001covariate,hu2006theory,zhang2007asymptotic,van2008construction,rosenberger2008handling,atkinson2011covariate,hu2015unified}. 
These adaptive designs are also related to developments in the multi-armed and contextual bandit literature, which studies sequential decision-making strategies with the objective of regret minimization \citep{bubeck2012regret,lattimore2020bandit}.

However, relatively few studies have investigated adaptive experimental designs in longitudinal settings. Existing work in this area primarily primarily considers settings with longitudinally observed responses or covariates but a single treatment decision, where intermediate information is used to adapt the randomization for that decision 
\citep{huang2013longitudinal,huang2016longitudinal,biswas2004randomized,huang2009using,yang2024targeting,zhang2024evaluating,gao2024response}. Nevertheless, these approaches do not consider adaptive designs in more complex longitudinal settings involving multi-stage, time-varying treatments. Such settings are canonical in the causal inference literature \citep{robins1986new,murphy2003optimal,bang2005doubly,van2011LTMLE,van2018targeted} and arise commonly in clinical
decision-making, precision medicine, and sequential multiple assignment
randomized trials
\citep{lavori2000design,murphy2005experimental,lei2012smart,
kosorok2019precision,kosorok2021introduction}.
In these scenarios, participants are followed over multiple stages, intermediate covariates or responses are revealed sequentially, and treatment at each stage is assigned based on a participant’s evolving history until the final outcome is observed.
This longitudinal structure fundamentally differs from experiments with a single treatment stage and calls for adaptive randomization strategies that explicitly account for the sequential nature of treatment assignment and information accrual.
While prior work has considered adaptive experiments or bandit algorithms with time-varying treatments to improve participants’ outcomes \citep{van2008construction,hu2020dtr,wang2022adaptive,ghosh2024optimal}, principled approaches to optimizing the estimation efficiency of statistical estimands remain relatively underdeveloped in these longitudinal settings.

In this paper, we develop a semiparametric efficiency framework for constructing longitudinal adaptive experiments with stage-wise treatment randomization mechanisms that efficiently learn one or multiple target
estimands. We further propose a general framework for semiparametric efficient estimation of statistical estimands from 
dependent data collected under longitudinal adaptive experiments, which applies to both the efficiency-oriented designs in this work and broader classes of adaptive experiments.

The contributions of this work are as follows.
(i) We propose a general efficiency-oriented design criterion for longitudinal experiments that targets the estimation efficiency of one or multiple estimands within a unified formulation.
(ii) For a broad class of longitudinal target estimands, we characterize the oracle design that optimizes the proposed criterion and uncover its forward-looking structure: optimal randomization at earlier stages depends on optimal allocations at later stages. This structure yields a backward-recursive strategy for deriving explicit stage-specific oracle randomization probabilities. 
(iii) We develop a covariate-adjusted response-adaptive (CARA) design that sequentially updates stage-wise randomization probabilities by learning and targeting the oracle design using accumulating data.
(iv) Finally, we propose an adaptive-design-likelihood longitudinal targeted maximum likelihood estimator (ADL-LTMLE) that enables asymptotically normal and efficient estimation of target estimands in the multi-stage adaptive experiments with time-varying treatments. Our approach builds on the literature on targeted maximum likelihood estimation (TMLE) for adaptively collected data \citep{van2008construction,chambaz2011targeting,chambaz2017targeted,malenica2021adaptive,malenica2024adaptive,zhang2024evaluating}, and recent work that establishes semiparametric efficiency of the ADL-TMLE approach under adaptive experiments \citep{zhang2025efficient}.

\looseness=-1
We illustrate the framework through longitudinal adaptive experiments for estimating multiple treatment-initiation effects. Simulations show that the proposed adaptive design yields substantial variance reduction relative to non-adaptive designs, especially when combined with the efficient ADL-LTMLE approach, with its performance close to that of the oracle design.

Overall, this work contributes a general maximal-information-type framework
for the design and analysis of adaptive experiments with longitudinal,
time-varying treatments. The framework accommodates a wide range of target
estimands and their joint collections, including estimands of causal effects
and mean outcomes under dynamic or stochastic treatment rules, and recovers
single-stage adaptive designs as a special case.

The remainder of the paper is organized as follows.
Section~\ref{sec:setup} describes the statistical setup.
Section~\ref{sec:general_framework} introduces the proposed
efficiency-oriented design criterion, and develops a backward-recursive
strategy for characterizing oracle designs for either a single target
estimand or multiple target estimands and constructing the corresponding
longitudinal adaptive designs.
Section~\ref{section:adl-ltmle} develops semiparametric efficient
adaptive-design-likelihood-based longitudinal TMLE methods for estimating
target statistical estimands and providing inference based on data collected
from longitudinal adaptive experiments.
Section~\ref{sec:example_estimand} presents treatment-initiation effects
as an illustrative application of the general framework and derives oracle
and adaptive designs for their efficient joint estimation.
Section~\ref{sec:simulation} reports the simulation studies.
Section~\ref{sec:conclusion} concludes with a discussion.
\section{Statistical Setup}
\label{sec:setup}

\subsection{Data Structure}
\looseness=-1
Consider a longitudinal data structure
$
O = (L_1, A_1, \ldots, L_K, A_K, L_{K+1} \equiv Y)$
over $K+1$ follow-up stages for each participant. At each stage $k \in [K] := \{1,\ldots,K\}$, 
$L_k \in \mathcal L_k$ denotes covariates (which can also be intermediate responses for $k > 1$), $A_k \in \mathcal{A}_k$ denotes the categorical treatment assigned at stage $k$. The final outcome $Y \in \mathcal Y$, also denoted by $L_{K+1}$, is a bounded outcome observed at stage $K+1$, which we assume without loss of generality that $\mathcal Y = [0,1]$. Let $\bar A_k := (A_1,\ldots,A_k)$ and $\bar L_k := (L_1,\ldots,L_k)$ denote the treatment and covariate histories up to stage $k$, and define $H_k := (\bar A_{k-1}, \bar L_k) \in \mathcal{H}_k$ as the participant's observed history immediately before treatment $A_k$.
Let $Q_1$ denote the marginal distribution of baseline covariates $L_1$, and for $k = 1,\ldots,K$, let $Q_{k+1}$ denote the conditional distribution of $L_{k+1}$ given $(A_k,H_k)$. We collect these distributions into $\bfQ := (Q_1,\ldots,Q_{K+1})$. Let $\bfg := (g_1,\ldots,g_K)$ denote a collection of stage-wise treatment randomization mechanisms, where each $g_k: \cH_k \times \cA_k \to [0,1]$ maps a history-treatment pair
$(h_k,a_k)$ to the conditional probability of assigning treatment $a_k$
given history $h_k$. We write $g_k(a_k \mid h_k)$ as shorthand for
$P(A_k = a_k \mid H_k = h_k)$ under the randomization mechanism $g_k$.
We consider the nonparametric statistical model
$
\mathcal M
=
\{ P_{\bfQ,\bfg} : \bfQ \in \mathcal Q,\ \bfg \in \mathcal G \}$,
where each $P_{\bfQ,\bfg}$ admits the following likelihood factorization:
\begin{align}
p_{\bfQ,\bfg}(o)
=
q_1(l_1)
\prod_{k=1}^K
g_k(a_k \mid h_k)\,
q_{k+1}(l_{k+1} \mid a_k, h_k), \label{eq:likelihood}
\end{align}
which defines the joint density of the longitudinal data $O$ under
$P_{\bfQ,\bfg}$.
Here, $\mathcal Q$ denotes the nonparametric model for
$\bfQ=(Q_1,\ldots,Q_{K+1})$, and
$\mathcal G$ denotes the class of admissible treatment 
randomization functions.

\subsection{Longitudinal Adaptive Experiment}
\label{sec:longitudinal_adaptive_design}

Consider a sequential longitudinal adaptive experiment with participants
indexed by $i=1,\ldots,n$, where each participant is followed for $K+1$
stages. For each participant $i$, we observe longitudinal data
$
O_i
=
(L_{1,i},A_{1,i},\ldots,L_{K,i},A_{K,i},Y_i),
$
where $L_{k,i}$ denotes time-varying covariates at stage $k$,
$A_{k,i}\in \cA_{k}$ denotes the categorical treatment assigned at stage $k$, and
$Y_i\equiv L_{K+1,i}$ is the final outcome.
Define the accumulated data up to participant $i$ by
$
\bO_i
:=
\{O_1,\ldots,O_i\}$, where $\bO_0:=\emptyset$.
Let $\mathcal F_i
=
\sigma(O_1,\ldots,O_i)$
denote the filtration generated by the sequentially observed data.

Before participant $i$ is enrolled, the experimenter uses the accumulated
data $\bO_{i-1}$ to construct a collection of stage-wise randomization
mechanisms
$
\bfg_i
=
(g_{1,i},\ldots,g_{K,i})$.
Conditional on $\bO_{i-1}$, treatment at stage $k$ for participant $i$ is
assigned according to the known randomization mechanism
$g_{k,i}(\cdot\mid H_{k,i})$. These mechanisms govern the participant's
treatment assignments sequentially across stages throughout longitudinal
follow-up.
This formulation also includes cohort-wise adaptation as
a special case. Specifically, if participants are enrolled in cohorts, the
same collection of randomization mechanisms $\bfg_i$ may be applied to all
participants within a cohort and updated only between cohorts.

Let
$
\bfg_{1:n}
:=
(\bfg_1,\ldots,\bfg_n)$
denote the known adaptive design sequence applied to the $n$ participants.
For a given $(\bfQ,\bfg_{1:n})$, we write
$P_{\bfQ,\bfg_{1:n}}$ for the joint distribution of
$(O_1,\ldots,O_n)$. The likelihood of $(o_1,\ldots,o_n)$ is given by
\begin{eqnarray*}
p_{\bfQ,\bfg_{1:n}}(o_1,\ldots,o_n)
&=&
\prod_{i=1}^n
\left[
q_1(l_{1,i})
\prod_{k=1}^K
\Bigl\{
g_{k,i}(a_{k,i}\mid h_{k,i})\,
q_{k+1}(l_{k+1,i}\mid a_{k,i},h_{k,i})
\Bigr\}
\right].
\end{eqnarray*}

\section{A General Framework for Efficiency-Oriented Longitudinal Adaptive Design}
\label{sec:general_framework}

\subsection{Oracle Longitudinal Design for Estimation Efficiency}
\label{subsec:design_criterion}
\looseness=-1
Let $\psi(\bfQ)$ denote a pathwise differentiable target estimand of
interest. We introduce a semiparametric efficiency framework for defining
an efficiency-oriented design criterion. Specifically, for a fixed design $\bfg$, define
\begin{equation}
C(\bfQ,\bfg) := \Var_{P_{\bfQ,\bfg}} \left\{ D(P_{\bfQ,\bfg})(O) \right\},
\label{eq:design_objective}
\end{equation}
where $D(P_{\bfQ,\bfg})$ denotes the efficient influence curve (EIC) of
$\psi(\bfQ)$ under the observed-data distribution $P_{\bfQ,\bfg}$.
Under standard regularity conditions, the pathwise derivative of
$\psi(\bfQ)$ along any regular parametric submodel can be represented as
the inner product of its score with an influence function. By the Riesz
representation theorem, the EIC is the unique element of the tangent space
that represents this derivative and has the smallest $L^2(P_{\bfQ,\bfg})$
norm among all influence functions. Consequently,
\eqref{eq:design_objective} is the semiparametric efficiency bound that
characterizes the smallest asymptotic variance attainable by a regular
asymptotically linear estimator of $\psi(\bfQ)$
\citep{laan2003unified}.

While the collection of randomization mechanisms
$\bfg=(g_1,\ldots,g_K)$ does not alter the definition of $\psi(\bfQ)$,
it determines the induced observed-data distribution
$P_{\bfQ,\bfg}$ and, consequently, the statistical information available
for estimating $\psi(\bfQ)$. This motivates the oracle maximal-information longitudinal design,
defined as
\begin{equation}
\bfg^{\star}(\bfQ)
:=
\arg\min_{\bfg\in\mathcal G}
C(\bfQ,\bfg).
\label{eq:oracle_design_def_single}
\end{equation}

For many longitudinal target estimands, the EIC admits a stage-wise
decomposition
$
D(P_{\bfQ,\bfg})
=
\sum_{k=0}^K D_k(P_{\bfQ,\bfg})$,
where $D_0(P_{\bfQ,\bfg})$ is the mean-zero baseline component associated
with the marginal law of $L_1$. For each $k\in[K]$,
$D_k(P_{\bfQ,\bfg})(O)$ is the component associated with the conditional
law of $L_{k+1}$ given $(H_k,A_k)$; it is measurable with respect to
$(H_k,A_k,L_{k+1})$ and satisfies
$
E_{P_{\bfQ,\bfg}}
\left\{
D_k(P_{\bfQ,\bfg})(O)
\mid H_k,A_k
\right\}
=
0$.
These stage-specific components form martingale increments and are
mutually orthogonal in $L^2(P_{\bfQ,\bfg})$. Consequently, the variance
of the EIC decomposes as
$
\Var_{P_{\bfQ,\bfg}}
\left\{
D(P_{\bfQ,\bfg})(O)
\right\}
=
\sum_{k=0}^K
E_{P_{\bfQ,\bfg}}
\left[
\left\{
D_k(P_{\bfQ,\bfg})(O)
\right\}^2
\right].
$
Thus, the semiparametric efficiency bound separates into stage-specific
contributions through which the longitudinal treatment mechanisms affect
the total estimation variance.

\subsection{Backward-Recursive Structure of the Oracle Design}
\label{subsec:general_recursive}
\label{sec:oracle_design}

We consider a broad class of longitudinal target estimands whose  
stage-specific EIC components have the following representation:
\begin{equation}
D_k(P_{\bfQ,\bfg})(O) = \frac{ \Phi_k(\bfQ)(O) }{ \prod_{s=1}^k g_s(A_s\mid H_s) }, \qquad k = 1,\ldots,K,
\label{eq:general_estimand}
\end{equation}
where we require that each $\Phi_k(\bfQ)$ is an estimand-specific function that does not depend on the experimental mechanism $\bfg$. 

\looseness=-1
This representation characterizes a general class of statistical estimands supported by the proposed design optimization framework, including many classical causal estimands in longitudinal settings such as mean outcomes under static, dynamic, or stochastic treatment rules and their linear combinations (see, e.g., \cite{robins1986new,laan2003unified,murphy2003optimal,van2015targeted,wang2025targeted}).

Importantly, for this class of estimands, the stage-wise EIC components exhibit a nested dependence on the longitudinal treatment mechanisms, as indicated by the following simplified indexing:
\[
D_1(\Phi_1,g_1),\quad
D_2(\Phi_2,g_1,g_2),\quad
\ldots,\quad
D_K(\Phi_K,g_1,\ldots,g_K).
\]
Specifically, conditional on the realized history $H_k$ available before treatment assignment at stage $k$, the randomization mechanism $g_k$ affects the variance contribution of $D_k$ and all subsequent components $D_{k+1},\ldots,D_K$, where the latter contributions further depend on the future treatment mechanisms $g_{k+1},\ldots,g_K$. 

This nested dependence gives rise to a forward-looking, backward-recursive optimization of the longitudinal design. The recursion starts by optimizing the terminal randomization mechanism $g_K$ and then proceeds backward through
$g_{K-1},\ldots,g_1$, so that each earlier-stage randomization mechanism accounts for its downstream
contribution to the variance through the subsequent data trajectories it induces. 
The procedure is as follows.

First, at the final stage $K$, let
$
\sigma_K^2(a_K\mid h_K) = E \left[ \left\{ \Phi_K(\bfQ)(O) \right\}^2 \mid H_K=h_K,A_K=a_K \right]$.
Conditional on $H_K=h_K$, minimizing the final-stage variance contribution of $D_K$ yields the allocation:
\begin{equation}
g_K^\star(a_K\mid h_K) = \frac{ \sigma_K(a_K\mid h_K) }{ \displaystyle \sum_{a_K'\in\mathcal A_K} \sigma_K(a_K'\mid h_K) }.
\label{eq:oracle_terminal}
\end{equation}
Intuitively, the oracle design assigns
greater probability to treatment arms associated with larger current and future uncertainty conditional on the current history. Collecting more observations under these arms reduces their contribution to the overall uncertainty and improves the efficiency of estimating the target estimand.

Proceeding backward, for \(k=K-1,\ldots,1\), define
\begin{align}
\sigma_k^2(a_k\mid h_k)
={}&
E\left[
\left\{
\Phi_k(\bfQ)(O)
\right\}^2
\mid
H_k=h_k,A_k=a_k
\right]
\nonumber\\
&+
E\left[
\left.
\sum_{a_{k+1}\in\mathcal A_{k+1}}
\frac{
\sigma_{k+1}^2(a_{k+1}\mid H_{k+1})
}{
g_{k+1}^\star(a_{k+1}\mid H_{k+1})
}
\right|
H_k=h_k,A_k=a_k
\right],
\label{eq:recursive_sigma_single}
\end{align}
which aggregates the conditional
variance contributions of \(D_k,\ldots,D_K\) under the
future oracle mechanisms
\((g_{k+1}^\star,\ldots,g_K^\star)\).

The resulting stage-\(k\) oracle allocation is given by:
\begin{equation}
g_k^\star(a_k\mid h_k)
=
\frac{
\sigma_k(a_k\mid h_k)
}{
\displaystyle
\sum_{a_k'\in\mathcal A_k}
\sigma_k(a_k'\mid h_k)
},
\qquad k=K,\ldots,1.
\label{eq:oracle_single_recursive}
\end{equation}

The final collection 
$
\bfg^\star
:=
\left(
g_1^\star,\ldots,g_K^\star
\right)$ thus 
constitutes the oracle longitudinal design across all stages.

\subsection{Oracle Longitudinal Design for Multiple Target Estimands}
\label{subsec:joint_optimization}
With rich data trajectories involving time-varying treatments and responses, longitudinal studies create opportunities to address a wide range of scientific questions through multiple target estimands of interest. However, they also give rise to complex trade-offs in experimental design, as a design optimized for one estimand may provide less favorable data support for estimating others. This challenge is particularly substantial in longitudinal experiments, where efficiency losses due to limited support may propagate multiplicatively through the stage-wise dependence of the longitudinal data.
This motivates a joint design criterion that explicitly accounts for the 
trade-offs in estimation efficiency across multiple target estimands.

\looseness=-1
Consider a collection of target estimands of interest 
$\{\psi^j(\bfQ):j\in[J]\}$. We define the joint
efficiency-oriented criterion
\begin{equation}
C_w(\bfQ,\bfg)
:=
\sum_{j\in[J]}
w^j
\Var_{P_{\bfQ,\bfg}}
\left\{
D^j(P_{\bfQ,\bfg})(O)
\right\},
\label{eq:design_objective_multiple}
\end{equation}
where $D^j(P_{\bfQ,\bfg})$ denotes the EIC corresponding to
$\psi^j(\bfQ)$ under $P_{\bfQ,\bfg}$, and the user-specified
nonnegative weights $\{w^j:j\in[J]\}$ satisfying $\sum_{j=1}^J w^j = 1$ determine the relative emphasis
placed on the estimation efficiency of each target estimand.

When the EIC of each target estimand admits the stage-wise representation in
\eqref{eq:general_estimand}, the optimization for the joint criterion also preserves the same
backward-recursive structure. Specifically, for each estimand \(\psi^j\), let
\(\sigma_k^{2,j}(a_k\mid h_k)\) denote the conditional second-moment
contribution of its stage-\(k\) and subsequent EIC components under the
common future oracle mechanisms. Define
$
\Sigma_k^2(a_k\mid h_k)
=
\sum_{j\in[J]}
w^j
\sigma_k^{2,j}(a_k\mid h_k)$.
The oracle allocation conditional on the history of stage $k$ is then
$
g_k^\star(a_k\mid h_k)
=
\frac{
\Sigma_k(a_k\mid h_k)
}{
\displaystyle
\sum_{a_k'\in\mathcal A_k}
\Sigma_k(a_k'\mid h_k)
}$
for $k=K,\ldots,1$.
Thus, the joint criterion retains the same stage-wise allocation form,
while aggregating uncertainty across the target estimands according to
user-specified weights that reflect their relative estimation priorities.

\looseness=-1
More generally, let
$
\boldsymbol D(P_{\bfQ,\bfg})
=
\left\{
D^1(P_{\bfQ,\bfg}),
\ldots,
D^J(P_{\bfQ,\bfg})
\right\}^{\top}$
denote the vector-valued EIC, and consider 
$
C_{\boldsymbol\Omega}(\bfQ,\bfg)
=
\operatorname{tr}
\left[
\boldsymbol\Omega\,
\Var_{P_{\bfQ,\bfg}}
\left\{
\boldsymbol D(P_{\bfQ,\bfg})(O)
\right\}
\right]$ as a design criterion,
where
$
\Var_{P_{\bfQ,\bfg}}
\left\{
\boldsymbol D(P_{\bfQ,\bfg})(O)
\right\}$
is the covariance matrix of the EICs and
$\boldsymbol\Omega$ is a user-specified symmetric positive-semidefinite
matrix. In particular, let $\boldsymbol\Omega
:=
B^\top\Lambda B$, where $B \in \mathbb{R}^{R\times J}$ is a prespecified
contrast matrix whose rows define $R$ linear combinations of the $J$
target estimands, and let 
$
\Lambda
=
\operatorname{diag}(\lambda^1,\ldots,\lambda^R)$
be a diagonal matrix of nonnegative weights. This
criterion represents a weighted sum of the semiparametric
efficiency bounds for the user-specified linear combinations of the estimands. The backward-recursive optimization procedure continues to apply.

\subsection{Efficiency-Oriented Longitudinal Adaptive Design}
However, oracle designs depend on unknown features of the data-generating
distribution and therefore cannot be implemented \textit{a priori}.
This motivates a longitudinal adaptive design that estimates the
relevant quantities sequentially from accumulating data
and updates the stage-specific randomization mechanisms accordingly.

Specifically, before enrolling participant $i$, we can use the accumulated observations 
$\bO_{i-1}$ to estimate
$\{\sigma_k^2(a_k\mid H_k):k\in[K]\}$ for a single target estimand, or
$\{\Sigma_k^2(a_k\mid H_k):k\in[K]\}$ for multiple target estimands.
Substituting these estimates into the backward-recursive solution yields
the estimated oracle mechanisms
$\{\hat g_{k,i}^\star(\cdot\mid H_k):k\in[K]\}$.
The stage-specific treatment mechanisms for participant $i$ are then
given by
$g_{k,i}(\cdot\mid H_k)
=
\hat g_{k,i}^\star(\cdot\mid H_k)$ for $k\in[K]$.
By doing so, the longitudinal adaptive design updates the randomization
mechanisms as information accumulates, thereby steering the distribution of the accruing data toward that under the oracle design with the intended efficiency objective.

\subsection{Inference After Longitudinal Adaptive Experiments}
While adaptive experiments are attractive because they can steer the randomization mechanisms toward the design objective, the resulting dependence across participants invalidates standard inferential procedures developed for i.i.d.\ data, particularly when
inference is conducted without parametric assumptions. In longitudinal experiments, statistical inference is further complicated by
the sequential dependence among time-varying covariates, treatments, and
outcomes within each participant. In the next section, we provide a targeted maximum likelihood estimation (TMLE) framework to address these challenges.

\section{Adaptive-Design-Likelihood-Based Longitudinal TMLE}
\label{section:adl-ltmle}
We develop an adaptive-design-likelihood-based longitudinal TMLE (ADL-LTMLE) approach for semiparametric efficient inference under longitudinal adaptive experiments, building on recent advances in adaptive-design-likelihood-based TMLE (ADL-TMLE) framework \citep{zhang2025efficient}. This framework relates semiparametric efficiency bound under an adaptive experiment to the corresponding bound under an i.i.d.\ reference distribution induced by the average of the distributions generated under the adaptive design. It thereby yields an asymptotically normal and semiparametric efficient estimator for adaptively collected data without relying on parametric model assumptions. 

This framework applies to a broad class of longitudinal target estimands,
including those defined under dynamic or stochastic treatment rules and other
estimands that admit analogous TMLE procedures based on martingale and
semiparametric efficiency theory. More generally, it applies to target
estimands of the form $\psi(\bfQ)$ for data likelihoods factorizing as
$
p(o_1,\ldots,o_n)=\prod_{i=1}^n \mathbf q(o_i)\bfg_i(o_i),
$
where $\mathbf q$ is the density of some $\bfQ\in\mathcal Q$ that defines
the target estimand. Thus, the framework is not restricted to a particular
type of adaptive experiment or to the estimands used to guide the adaptive
design, provided that the observed data offer sufficient support for the
target estimand of interest.

For the rest of this section, we use $\psi^{\bd}(\bfQ)$, the mean outcome under a dynamic treatment rule $\bd$, as a representative example of the proposed framework in longitudinal settings. We first formalize this estimand and introduce the proposed ADL-LTMLE procedure, and then establish its asymptotic normality and semiparametric efficiency, with proofs provided in
the Supplementary Material.
\subsection{Expected Outcome under a Dynamic Treatment Rule}
A dynamic treatment rule is a sequence of decision rules
$\bd=(d_1,\ldots,d_K)$, where each
$d_k:\mathcal H_k\to\mathcal A_k$ maps the observed history $H_k$
to a treatment at stage $k$. For each $k\in[K]$, let
$
L_{k+1:K+1}^{\bd}
=
\left(
L_{k+1}^{\bd},\ldots,L_{K+1}^{\bd}
\right)
$
denote the counterfactual trajectory of covariates/outcomes from stage
$k+1$ onward under rule $\bd$, with
$L_{K+1}^{\bd}=Y^{\bd}$. The causal estimand of interest is
$
\Psi^{\bd}
:=
E\left(Y^{\bd}\right),
$
the mean counterfactual outcome under rule $\bd$. Let $P_{\bfQ_0}^{\bd}$ denote the intervention distribution induced by
$\bfQ_0$ when treatment is assigned by rule $\bd$, and define
$
\bar G_{k,n}^{\bd}(h_k)
=
\frac{1}{n}\sum_{i=1}^n
\prod_{s=1}^k
g_{s,i}\{d_s(h_s)\mid h_s\}$
as the average cumulative probability of following $\bd$ across participants in the adaptive experiment.

\begin{assumption}[Identification]
\label{assumption:causal-identification-d}
For a treatment rule $\bd=(d_1,\ldots,d_K)$, suppose that:
\begin{enumerate}
\item[(i)] \textit{Consistency:} if
$A_{s,i}=d_s(H_{s,i})$ for every $s\leq k$, then
$L_{k+1,i}=L_{k+1,i}^{\bd}$;
\item[(ii)] \textit{Sequential randomization:} for every $k\in[K]$,
$
L_{k+1:K+1}^{\bd} 
\indep
A_{k,i}
\mid
H_{k,i},\bO_{i-1}$,
where $\bO_{i-1}$ denotes accumulated data before participant
$i$ is enrolled;
\item[(iii)] \textit{Positivity:} $\forall k\in[K]$,
$
\bar G_{k,n}^{\bd}(h_k)>0
$ for $P_{\bfQ_0}^{\bd}$-almost every history $h_k$.
\end{enumerate}
\end{assumption}

Here, consistency links the observed and counterfactual variables
along the target treatment rule, sequential randomization removes
treatment confounding conditional on the observed history and prior
experimental data, and positivity ensures that the adaptive experiment
provides support for the treatment trajectory required to evaluate
$\bd$.

To state the identification formula, we set
$m_{K+1}^{\bd}(\emptyset,H_{K+1}) \equiv Y$ throughout and, for
$k=K,\ldots,1$, define
\[
m_k^{\bd}(a_k,h_k)
=
E_{\bfQ}
\left[
m_{k+1}^{\bd}
\{d_{k+1}(H_{k+1}),H_{k+1}\}
\mid
A_k=a_k,H_k=h_k
\right],
\]
with the convention that $m_{k+1}^{\bd}
\{d_{k+1}(H_{k+1}),H_{k+1}\} \equiv Y$ when $k=K$. Under the conditions above, the longitudinal
$g$-computation formula \citep{robins1986new} identifies $\Psi^{\bd}$ through 
\[
\psi^{\bd}(\bfQ)
:=
E_{\bfQ}
\left[
m_1^{\bd}
\{d_1(H_1),H_1\}
\right],
\]
which serves as the observed-data target estimand for statistical estimation.
The EIC of $\psi^{\bd}(\bfQ)$ at $P_{\bfQ,\bfg}$ is
\begin{eqnarray*}
&&D^{\bd}(P_{\bfQ,\bfg})(O) \\
&=&
m_1^{\bd}\{d_1(H_1),H_1\}
-
\psi^{\bd}(\bfQ) \\ 
&&+
\sum_{k=1}^K
\frac{
\prod_{s=1}^k \I\{A_s=d_s(H_s)\}
}{
\prod_{s=1}^k g_s\{d_s(H_s)\mid H_s\}
}
\left[
m_{k+1}^{\bd}
\{d_{k+1}(H_{k+1}),H_{k+1}\} 
-
m_k^{\bd}(A_k,H_k) 
\right].
\end{eqnarray*}

\subsection{ADL-LTMLE Procedure}
\label{subsec:adl-ltmle-procedure}
The ADL-LTMLE is constructed recursively as follows.

First, we initialize the final pseudo-outcome by
$
m_{n,K+1}^{\bd,*}
\{d_{K+1}(H_{K+1}),H_{K+1}\} 
\equiv
Y$.
Then, for each $k=K,\ldots,1$, we proceed as follows.

At stage $k$, we use the targeted regression $m_{n,k+1}^{\bd,*}$ obtained at stage \(k+1\) to construct an estimate
$m_{n,k}^{\bd}(a_k,h_k)$ 
by regressing  
$
m_{n,k+1}^{\bd,*}
\{d_{k+1}(H_{k+1}),H_{k+1}\}$ on $A_k$ and $H_k$.

We then perform a targeting step at stage $k$. Using the
average-reference cumulative treatment probability
$\bar G_{k,n}^{\bd}$ defined above, let
$
H_{k,n}^{\bd}(O)
=
\frac{
\prod_{s=1}^k \I\{A_s=d_s(H_s)\}
}{
\bar G_{k,n}^{\bd}(H_k)
}$.
We fluctuate \(m_{n,k}^{\bd}\) along a logistic model:
\begin{align}
\operatorname{logit}
m_{n,k,\epsilon}^{\bd}(A_k,H_k)
=
\operatorname{logit}
m_{n,k}^{\bd}(A_k,H_k)
+
\epsilon H_{k,n}^{\bd}(O),    
\end{align}
where we use
$
m_{n,k+1}^{\bd,*}
\{d_{k+1}(H_{k+1}),H_{k+1}\}$
as the outcome. Let \(\hat\epsilon_{k,n}\) denote the estimated
fluctuation parameter and set
$
m_{n,k}^{\bd,*}
=
m_{n,k,\hat\epsilon_{k,n}}^{\bd}$. The targeted prediction
$m_{n,k}^{\bd,*}\{d_k(H_k),H_k\}$
then becomes the pseudo-outcome for estimating the regression at stage
\(k-1\). 

After completing the recursion, let
$
\mathbf m_n^{\bd,*}
=
\left(
m_{n,1}^{\bd,*},\ldots,m_{n,K}^{\bd,*}
\right)$
denote the resulting collection of targeted sequential regressions, and let
\(Q_{1,n}^*\) denote the empirical distribution of \(L_1\). We use
$
\bfQ_n^{*}
=
\left(
Q_{1,n}^*,
Q_{2,n}^{*},
\ldots,
Q_{K+1,n}^{*}
\right)
$
to denote the longitudinal distribution that induces the sequential regressions
\(\mathbf m_n^{\bd,*}\), which solves $\frac{1}{n}\sum_{i=1}^n D^{\bd}(P_{\bfQ_n^*,\bfg_n})(O_i)=0$.
The resulting ADL-LTMLE is
\[
\psi^{\bd}(\bfQ_n^*)
:=
\frac{1}{n}\sum_{i=1}^n
m_{n,1}^{\bd,*}
\{d_1(H_{1,i}),H_{1,i}\}.
\]


\subsection{Asymptotic Normality}
We now state the asymptotic normality of the ADL-LTMLE.
For any
integrable function \(f\), we use
$P_{\bfQ_0,\bfg_i}f = E\{f(O_i)\mid\mathcal F_{i-1}\}$
to denote the conditional expectation under the participant-specific
randomization mechanism \(\bfg_i\), given the previously observed data. Define an i.i.d. reference law by the average of
the participant-specific observed-data laws across participants in the adaptive experiment, i.e.,
$
\bar P_{\bfQ,\bfg_{1:n}}
:=
\frac{1}{n}\sum_{i=1}^n
P_{\bfQ,\bfg_i}$, which serves as
the reference law for characterizing the
semiparametric efficiency bound for regular estimators under adaptive experiments and for constructing the ADL-LTMLE. Moreover, we write
$
\bar P_{\bfQ,\bfg_{1:n}}
=
P_{\bfQ,\bar{\bfg}_n}$
as shorthand, where
$\bar{\bfg}_n=(\bar g_{1,n},\ldots,\bar g_{K,n})$
denotes the longitudinal treatment mechanism induced by the
average-reference law. In particular,   
$
\bar g_{k,n}(a_k\mid h_k)
:=
\frac{
\bar G_{k,n}(\bar a_k,\bar h_k)
}{
\bar G_{k-1,n}(\bar a_{k-1},\bar h_{k-1})
}
$, where $
\bar G_{k,n}(\bar a_k,\bar h_k)
:=
\frac{1}{n}
\sum_{i=1}^n
\prod_{s=1}^k
g_{s,i}(a_s\mid h_s)$, and $\bar G_{0,n}:=1$. Further details are provided in
Section~S1 of the Supplementary Material.

Let \(\bfQ_\infty\) and \(\bar{\bfg}_\infty\) denote fixed limits of \(\bfQ_n^*\) and \(\bar{\bfg}_n\), respectively, and define
$
G_k^{\bar{\bfg}_\infty,\bd}(h_k)
:=
\prod_{s=1}^k
\bar g_{s,\infty}\{d_s(h_s)\mid h_s\}$. 
Beyond the positivity and variance stabilization assumptions below, we additionally impose entropy and convergence conditions governing the
relevant function classes and the asymptotic behavior of $\bfQ_n^*$ and
$\bar{\bfg}_n$; their formal statements are provided in
Assumptions~
\ref{assumption:longitudinal-equicontinuity-conditions}--%
\ref{assumption:longitudinal-equicontinuity-convergence-g}
of the Supplementary Material.

\begin{assumption}[Strong Positivity of Average Cumulative Treatment Probabilities]
\label{assumption:longitudinal-average-reference-positivity}
There exists \(\zeta>0\) such that, for every \(n\),
$
\min_{k\in[K]}
\inf_{h_k\in\mathcal H_k}
\bar G_{k,n}^{\bd}(h_k)
\geq
\zeta$,
and 
$\min_{k\in[K]}
\inf_{h_k\in\mathcal H_k}
G_k^{\bar{\bfg}_\infty,\bd}(h_k)
\geq
\zeta$.

\end{assumption}

\begin{assumption}[Variance Stabilization]
\label{assumption:longitudinal-stabilized-variance-d}
\looseness=-1
There exists a constant  
$\sigma_{\bd}^2 > 0$
such that \\
$
\frac{1}{n}\sum_{i=1}^n
\operatorname{Var}_{P_{\bfQ_0,\bfg_i}}
\left[
D^{\bd}
(P_{\bfQ_\infty,\bar{\bfg}_{\infty}})(O)
\right]
\pto
\sigma_{\bd}^2$.
\end{assumption}

\begin{theorem}[Asymptotic Normality of the ADL-LTMLE]
\label{theorem:longitudinal-asymptotic-normality}

Suppose
Assumptions~\ref{assumption:longitudinal-average-reference-positivity},
\ref{assumption:longitudinal-stabilized-variance-d},
\ref{assumption:longitudinal-equicontinuity-conditions},
\ref{assumption:longitudinal-equicontinuity-convergence-Q}, and
\ref{assumption:longitudinal-equicontinuity-convergence-g}
hold. Then
\[
\sqrt n
\left\{
\psi^{\bd}(\bfQ_n^*)
-
\psi^{\bd}(\bfQ_0)
\right\}
\dto
N(0,\sigma_{\bd}^2).
\]
\end{theorem}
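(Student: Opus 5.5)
The argument follows the usual TMLE template, adapted to the martingale structure of the adaptive experiment. The central tool is the exact second-order expansion of $\psi^{\bd}$ with respect to the average-reference law $P_{\bfQ,\bar{\bfg}_n}$. For any $\bfQ$,
\[
\psi^{\bd}(\bfQ)-\psi^{\bd}(\bfQ_0)
=
-P_{\bfQ_0,\bar{\bfg}_n}D^{\bd}(P_{\bfQ,\bar{\bfg}_n})
+R_n(\bfQ,\bfQ_0,\bar{\bfg}_n).
\]
Here $R_n$ is the usual sum over stages. Stage $k$ contributes products of the difference $\{\bar G_{k,n}^{\bd}-G_{k,0}^{\bd}\}/\bar G_{k,n}^{\bd}$ with $m_k^{\bd}-m_{k,0}^{\bd}$. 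Because $\bar{\bfg}_n$ is exactly the mechanism generating the average-reference law, the randomization probabilities are known. The weight $\bar G^{\bd}_{k,n}$ therefore matches the true cumulative probability under $\bar P_{\bfQ_0,\bfg_{1:n}}$, and $R_n(\bfQ_n^*,\bfQ_0,\bar{\bfg}_n)=0$ exactly. To verify this, I will check that $P_{\bfQ_0,\bar{\bfg}_n}f=\frac1n\sum_i P_{\bfQ_0,\bfg_i}f$ for every $f$. This identity holds by linearity of the likelihood factorization and the definition of $\bar g_{k,n}$ as a ratio of averaged cumulative products; this is the content of Section S1.

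Next, I use the fact that the targeting steps solve $\frac1n\sum_i D^{\bd}(P_{\bfQ_n^*,\bar{\bfg}_n})(O_i)=0$. The logistic fluctuation with clever covariate $H_{k,n}^{\bd}$ and the empirical $Q_{1,n}^*$ give exactly this score equation. Writing $f_n:=D^{\bd}(P_{\bfQ_n^*,\bar{\bfg}_n})$ and $\bar P_n:=\frac1n\sum_i P_{\bfQ_0,\bfg_i}$, we get
\[
\psi^{\bd}(\bfQ_n^*)-\psi^{\bd}(\bfQ_0)
=\frac1n\sum_{i=1}^n\bigl\{f_n(O_i)-P_{\bfQ_0,\bfg_i}f_n\bigr\}.
\]
Let $f_\infty:=D^{\bd}(P_{\bfQ_\infty,\bar{\bfg}_\infty})$. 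The right-hand side splits as $M_n(f_\infty)+\{M_n(f_n)-M_n(f_\infty)\}$, where
\[
M_n(f):=\frac1n\sum_i\{f(O_i)-P_{\bfQ_0,\bfg_i}f\}
\]
is a martingale-difference average with respect to $\mathcal F_i$.

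For the leading term, $f_\infty$ is fixed, and it is bounded by Assumption~\ref{assumption:longitudinal-average-reference-positivity} together with $Y\in[0,1]$. The summands $f_\infty(O_i)-P_{\bfQ_0,\bfg_i}f_\infty$ are therefore bounded martingale differences. Their averaged conditional variance converges in probability to $\sigma^2_{\bd}$ by Assumption~\ref{assumption:longitudinal-stabilized-variance-d}. The Lindeberg condition is trivial because the increments are uniformly bounded. The martingale CLT then gives $\sqrt n\,M_n(f_\infty)\dto N(0,\sigma_{\bd}^2)$.

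The main obstacle is showing that the empirical-process remainder $\sqrt n\{M_n(f_n)-M_n(f_\infty)\}=o_P(1)$ for dependent, adaptively collected data, since $f_n$ is data-dependent. I would first argue that the map $(\bfQ,\bar{\bfg})\mapsto D^{\bd}(P_{\bfQ,\bar{\bfg}})$ is Lipschitz in $L^2$ on the set where $\bar G^{\bd}_{k}\ge\zeta$. Combined with the convergence conditions in Assumptions~\ref{assumption:longitudinal-equicontinuity-convergence-Q} and~\ref{assumption:longitudinal-equicontinuity-convergence-g}, this yields $\frac1n\sum_i P_{\bfQ_0,\bfg_i}(f_n-f_\infty)^2\pto0$. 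I would then invoke a maximal inequality for martingale empirical processes indexed by a class with bounded sequential bracketing or uniform entropy integral, in the style of van de Geer's or Bibaut et al.'s martingale Bernstein/chaining bounds. That inequality, under the entropy condition of Assumption~\ref{assumption:longitudinal-equicontinuity-conditions}, gives asymptotic equicontinuity of $f\mapsto\sqrt n M_n(f)$ and hence the negligibility of the remainder. Slutsky's lemma then completes the proof.
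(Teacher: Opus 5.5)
Your proposal is correct and follows essentially the same route as the paper. Both use the exact expansion $\psi^{\bd}(\bfQ)-\psi^{\bd}(\bfQ_0)=-P_{\bfQ_0,\bar{\bfg}_n}D^{\bd}(P_{\bfQ,\bar{\bfg}_n})$, which the paper gets by averaging participant-wise von Mises remainders that cancel because $\bar G^{\bd}_{k,n}=\frac1n\sum_i G^{\bd}_{k,i}$. Both then use the targeting score equation, a martingale CLT for the fixed-function term, and martingale equicontinuity for the rest, which the paper splits into $M_{2,n}^{\bd}$ and $M_{3,n}^{\bd}$ and controls via Lemma~\ref{lemma:martingale-equicontinuity}.

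One small precision. To pass the sup-norm entropy of Assumption~\ref{assumption:longitudinal-equicontinuity-conditions} to the EIC class, you need $(\mathbf m^{\bd},\bar{\bfg})\mapsto D^{\bd}$ to be Lipschitz in the supremum norm, as the paper checks, not merely in $L^2$. The $L^2$ convergence you want follows directly from Assumptions~\ref{assumption:longitudinal-equicontinuity-convergence-Q}--\ref{assumption:longitudinal-equicontinuity-convergence-g}, since $\rho(x)\ge x^2/2$.
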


Let $\hat{\sigma}_{\bd,n}^2$ be a consistent estimate of $\sigma_{\bd}^2$. The standard error of $\psi^{\bd}(\bfQ_n^*)$ is given by $\hat\sigma_{\bd,n}/\sqrt{n}$. The $(1-\alpha)$ Wald confidence interval is $\psi^{\bd}(\bfQ_n^*) \pm z_{1-\alpha/2}\hat{\sigma}_{\bd,n}/\sqrt{n}$, where $z_{1-\alpha/2}$ is the $(1-\alpha/2)$-quantile of the standard
normal distribution.

\subsection{Semiparametric Efficiency}
\label{subsec:longitudinal-efficiency}

We further establish the semiparametric efficiency of the ADL-LTMLE following the efficiency result of
\citet{zhang2025efficient}, which relates the semiparametric efficiency notion for the adaptive
experiment $P_{\bfQ,\bfg_{1:n}}$ to that of a limiting i.i.d.\ reference
experiment with law $P_{\bfQ,\bar{\bfg}_\infty}$.

For any
$\bfQ=(Q_1,\ldots,Q_{K+1}) \in \mathcal{Q}$, define the tangent space
\[
\mathcal H(\bfQ,\bar{\bfg}_\infty)
=
\left\{
h=\sum_{k=1}^{K+1}h_k
\in L^2(P_{\bfQ,\bar{\bfg}_\infty}):
\begin{array}{l}
E_{\bfQ}\{h_1(L_1)\}=0,\\
E_{\bfQ}\{h_{k+1}(L_{k+1},A_k,H_k)
\mid A_k,H_k\}=0,\quad k\in[K]
\end{array}
\right\},
\]
equipped with the inner product
$
\langle h,\widetilde h\rangle_{\mathcal H(\bfQ,\bar{\bfg}_\infty)}
=
P_{\bfQ,\bar{\bfg}_\infty}(h\widetilde h)$.
\begin{assumption}[Reference-Law Norm Stability]
\label{assumption:longitudinal-hilbert-norm}
For every
\(h\in\mathcal H(\bfQ_0,\bar{\bfg}_\infty)\),
$P_{\bfQ_0,\bar{\bfg}_n}h^2
\pto 
P_{\bfQ_0,\bar{\bfg}_\infty}h^2$.
\end{assumption}
\begin{theorem}[Semiparametric Efficiency of the ADL-LTMLE]
\label{theorem:longitudinal-efficiency}

Suppose Assumptions \ref{assumption:longitudinal-average-reference-positivity},
\ref{assumption:longitudinal-stabilized-variance-d},
\ref{assumption:longitudinal-hilbert-norm},
\ref{assumption:longitudinal-equicontinuity-conditions},
\ref{assumption:longitudinal-equicontinuity-convergence-Q}, and 
\ref{assumption:longitudinal-equicontinuity-convergence-g} hold, and 
\(\bfQ_\infty=\bfQ_0\). Then the ADL-LTMLE is regular and efficient, 
\[
\sqrt n
\left\{
\psi^{\bd}(\bfQ_n^*)
-
\psi^{\bd}(\bfQ_0)
\right\}
\dto 
N\left(
0,
P_{\bfQ_0,\bar{\bfg}_\infty}
\left[
D^{\bd}(P_{\bfQ_0,\bar{\bfg}_\infty})^2
\right]
\right),
\]
i.e., the asymptotic variance of $\sqrt{n}\{ \psi^{\bd}(\bfQ_n^*) - \psi^{\bd}(\bfQ_0)\}$ equals the minimum attainable asymptotic variance among all regular estimators.
\end{theorem}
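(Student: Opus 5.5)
Since $\bfQ_\infty=\bfQ_0$, we first specialize Theorem~\ref{theorem:longitudinal-asymptotic-normality}, which gives asymptotic normality with variance $\sigma^2_{\bd}$. We then identify $\sigma^2_{\bd}$ with the efficiency bound under the limiting i.i.d.\ reference law, and finally invoke the local asymptotic normality (LAN) and convolution argument of \citet{zhang2025efficient} to show that this bound is the minimal attainable variance for regular estimators in the adaptive experiment.

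\textbf{Step 1 (variance identification).} With $\bfQ_\infty=\bfQ_0$, the function $D^{\bd}(P_{\bfQ_0,\bar{\bfg}_\infty})$ has conditional mean zero under each $P_{\bfQ_0,\bfg_i}$. The reason is that each stage component $\prod_{s\le k}\I\{A_s=d_s\}/G_k^{\bar{\bfg}_\infty,\bd}\,[m_{k+1}^{\bd}-m_k^{\bd}]$ has conditional mean zero given $(A_k,H_k)$ under $\bfQ_0$, whatever the design. Hence
\[
\operatorname{Var}_{P_{\bfQ_0,\bfg_i}}\bigl[D^{\bd}(P_{\bfQ_0,\bar{\bfg}_\infty})\bigr]=P_{\bfQ_0,\bfg_i}\bigl[D^{\bd}(P_{\bfQ_0,\bar{\bfg}_\infty})^2\bigr].
\]
Averaging over $i$ and using linearity in the design gives $P_{\bfQ_0,\bar{\bfg}_n}[D^{\bd}(P_{\bfQ_0,\bar{\bfg}_\infty})^2]$. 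The function $D^{\bd}(P_{\bfQ_0,\bar{\bfg}_\infty})$ lies in $\mathcal H(\bfQ_0,\bar{\bfg}_\infty)$; it is bounded by the positivity bound in Assumption~\ref{assumption:longitudinal-average-reference-positivity} together with $Y\in[0,1]$. Assumption~\ref{assumption:longitudinal-hilbert-norm} therefore gives convergence to $P_{\bfQ_0,\bar{\bfg}_\infty}[D^{\bd}(P_{\bfQ_0,\bar{\bfg}_\infty})^2]$. Combining this with Assumption~\ref{assumption:longitudinal-stabilized-variance-d} identifies $\sigma^2_{\bd}$ with this quantity, which is the displayed limit.

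\textbf{Step 2 (LAN for the adaptive experiment).} For a direction $h=\sum_k h_k\in\mathcal H(\bfQ_0,\bar{\bfg}_\infty)$, take the bounded submodel $q_{k,\epsilon}=q_k(1+\epsilon h_k)$ (with truncation for unbounded $h$) and keep the known designs $\bfg_i$ fixed. The $\bfg_i$ factors cancel in the likelihood ratio, so
\[
\log\frac{dP_{\bfQ_{t/\sqrt n},\bfg_{1:n}}}{dP_{\bfQ_0,\bfg_{1:n}}}=\frac{t}{\sqrt n}\sum_i h(O_i)-\frac{t^2}{2n}\sum_i P_{\bfQ_0,\bfg_i}h^2+o_p(1).
\]
The term $n^{-1}\sum_i h(O_i)^2$ is replaced by its compensator via a martingale law of large numbers. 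The quadratic term converges to $P_{\bfQ_0,\bar{\bfg}_\infty}h^2$ by Assumption~\ref{assumption:longitudinal-hilbert-norm}, and the martingale CLT (whose Lindeberg condition follows from boundedness) yields LAN with Hilbert space $\mathcal H(\bfQ_0,\bar{\bfg}_\infty)$. The pathwise derivative of $\psi^{\bd}$ along $h$ is $\langle D^{\bd}(P_{\bfQ_0,\bar{\bfg}_\infty}),h\rangle$, because the canonical gradient in the i.i.d.\ model at $P_{\bfQ_0,\bar{\bfg}_\infty}$ is $D^{\bd}$ and the target depends on $\bfQ$ only. The convolution theorem for LAN experiments (van der Vaart's general form, as used in \citet{zhang2025efficient}) then shows that every regular estimator has limit law $N(0,\|D^{\bd}\|^2)*M$, so $\|D^{\bd}\|^2$ is the lower bound.

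\textbf{Step 3 (regularity of ADL-LTMLE).} From the proof of Theorem~\ref{theorem:longitudinal-asymptotic-normality}, the estimator is asymptotically linear:
\[
\sqrt n(\psi_n-\psi_0)=n^{-1/2}\sum_i D^{\bd}(P_{\bfQ_0,\bar{\bfg}_\infty})(O_i)+o_p(1).
\]
Joint martingale CLT for this term and the log-likelihood ratio gives bivariate normality with covariance $t\langle D^{\bd},h\rangle$. Le Cam's third lemma then shows that under $P_{\bfQ_{t/\sqrt n}}$ the limit is $N(t\langle D^{\bd},h\rangle,\sigma^2_{\bd})$. This matches the derivative shift, so the estimator is regular, and its variance attains the bound. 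The main obstacle is Step 2: verifying the convergence of the quadratic variation under dependent designs, which requires Assumption~\ref{assumption:longitudinal-hilbert-norm} for all $h$ in the tangent space rather than only for $D^{\bd}$, and handling unbounded $h$ by density of bounded scores.
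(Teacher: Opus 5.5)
Your proposal is correct and follows essentially the same route as the paper. You identify $\sigma^2_{\bd}$ with $P_{\bfQ_0,\bar{\bfg}_\infty}[D^{\bd}(P_{\bfQ_0,\bar{\bfg}_\infty})^2]$ using the design-free conditional mean-zero property together with Assumptions~\ref{assumption:longitudinal-stabilized-variance-d} and~\ref{assumption:longitudinal-hilbert-norm}; you prove LAN for the adaptive experiment by Taylor expansion and the martingale CLT, recognize $D^{\bd}(P_{\bfQ_0,\bar{\bfg}_\infty})$ as the canonical gradient, and conclude via the convolution theorem. The one place you go further is regularity: you derive it explicitly from the joint martingale CLT and Le Cam's third lemma, with the conditional covariance converging by polarization of Assumption~\ref{assumption:longitudinal-hilbert-norm}, whereas the paper simply defers this step to \citet{zhang2025efficient}.
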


\section{Application to Multiple Time-to-Treatment-Initiation Effects}
\label{sec:example_estimand}
This section illustrates the general framework developed in
Sections~\ref{sec:general_framework} using a
class of time-to-treatment-initiation effects as a motivating example.
These effects compare initiating treatment at a given stage with delaying
initiation until the subsequent stage. Such comparisons reflect clinically relevant decision problems in many settings, such as cancer and HIV treatments 
(see, e.g., \cite{cone2020assessment,zhao2022antiretroviral,zhang2025optimal}).

\paragraph{Time-to-Treatment-Initiation Effects.}

Consider a class of treatment rules that vary in the timing of treatment initiation:
\[
\bd^\tau
=
(\underbrace{0,\ldots,0}_{\tau-1},
 \underbrace{1,\ldots,1}_{K-\tau+1}),
\qquad
\bd^{K+1}=(0,\ldots,0),
\]
and consider the $\tau$-specific causal estimand
$
\Psi^\tau(\bfQ)
=
E(Y^{\bd^\tau})-E(Y^{\bd^{\tau+1}}),
$
which compares initiating treatment at stage $\tau$ with delaying
initiation until stage $\tau+1$. For $\tau = K$, this compares initiating treatment at the final stage $K$ with never initiating treatment. Under the identification conditions in Assumption \ref{assumption:causal-identification-d},
the target estimands of interest are
\[
\psi^\tau(\bfQ)
=
E\left[
m_1^{\bd^\tau}\{d_1^\tau(H_1),H_1\}
-
m_1^{\bd^{\tau+1}}
\{d_1^{\tau+1}(H_1),H_1\}
\right], \quad \tau \in [K].
\]

\paragraph{Longitudinal Adaptive Design.}
For a treatment rule $\bd$, define the stage-$k$ sequential-regression
residual
$
\Delta_k^{\bd}(O)
=
m_{k+1}^{\bd}
\{d_{k+1}(H_{k+1}),H_{k+1}\}
-
m_k^{\bd}(A_k,H_k)$,
where $m_{k+1}^{\bd}
\{d_{k+1}(H_{k+1}),H_{k+1}\} \equiv Y$ when $k=K$. The EIC of
$\psi^\tau$ admits the decomposition
$
D^\tau(P_{\bfQ,\bfg})
=
D_0^\tau(P_{\bfQ,\bfg})
+
\sum_{k=1}^K
D_k^\tau(P_{\bfQ,\bfg}),
$
where
$
D_0^\tau
=
m_1^{\bd^\tau}\{d_1^\tau(H_1),H_1\}
-
m_1^{\bd^{\tau+1}}
\{d_1^{\tau+1}(H_1),H_1\}
-
\psi^\tau,
$
and
$
D_k^\tau(P_{\bfQ,\bfg})(O)
=
\frac{
\Phi_k^\tau(\bfQ)(O)
}{
\prod_{s=1}^k g_s(A_s\mid H_s)
}$,
with
$
\Phi_k^\tau(\bfQ)(O)
=
\I(\bar A_k=\bd_{1:k}^{\tau})
\Delta_k^{\bd^\tau}(O)
-
\I(\bar A_k=\bd_{1:k}^{\tau+1})
\Delta_k^{\bd^{\tau+1}}(O).
$

This illustrates that this class of estimands admits the general stage-wise
representation required for applying the design optimization framework in Section~\ref{subsec:general_recursive}. Specifically, for each
stage $k$ and action $a$, let $\{\sigma_{a,k}^\tau(h_k)\}^2$ denote
the conditional variance of the downstream EIC components from $D_k^{\tau}$ to $D_K^{\tau}$
after separating out $g_k(a\mid h_k)^{-1}$. Conditional on
$H_k=h_k$, the component of the efficiency bound depending on $g_k$ is
$
\frac{\{\sigma_{1,k}^\tau(h_k)\}^2}{g_k(1\mid h_k)}
+
\frac{\{\sigma_{0,k}^\tau(h_k)\}^2}{g_k(0\mid h_k)}$.
Using the backward-recursive strategy to optimize the longitudinal design, the oracle randomization scheme at stage $k=\tau$ is 
\begin{equation}
g_k^*(1\mid h_k)
=
\frac{\sigma_{1,k}^{\tau}(h_k)}
     {\sigma_{0,k}^{\tau}(h_k) + \sigma_{1,k}^{\tau}(h_k)},
\quad
g_k^*(0\mid h_k)
=
\frac{\sigma_{0,k}^{\tau}(h_k)}
     {\sigma_{0,k}^{\tau}(h_k) + \sigma_{1,k}^{\tau}(h_k)}.
\label{eq:single_contrast_neyman}
\end{equation}
By contrast, for stages $k \neq \tau$, the oracle design becomes degenerate: 
\[
g_k^*(1\mid H_k)\equiv 0 \quad\text{for } k<\tau,
\quad
g_k^*(1\mid H_k)\equiv 1 \quad\text{for } k>\tau.
\]
\looseness=-1
Intuitively, the oracle design keeps all participants on the treatment history
shared by the two regimes before stage $\tau$, randomizes at the stage
where the regimes diverge according to the variance-minimizing allocation
determined by the relative action-specific conditional variability, and assigns the common treatment prescribed by both regimes for later stages. As a result, it maximally preserves the longitudinal support needed to
learn both treatment rules while randomizing treatment at the stage where
they diverge according to the optimal allocation defined by the efficiency
criterion for estimating their contrast.

\looseness=-1
However, when the goal is to learn multiple target estimands concurrently, such as the
collection of treatment-initiation effects
$\{\psi^\tau:\tau\in[K]\}$, an oracle design optimized for one estimand
may substantially reduce the information available for others. These trade-offs are particularly substantial in longitudinal
settings, where treatment is assigned sequentially and each stage-specific randomization decision directly shapes the data trajectories at subsequent stages. Consequently, favoring the optimal design for one estimand can irreversibly restrict the trajectories needed to estimate estimands that rely on sufficient data support at other stages, leading to limited support, inflated variance, or even loss of identification. This tension is explicit in the treatment-initiation-effect example, where estimand-specific oracle designs concentrate randomization at different stages and thereby impose competing demands on longitudinal data support. The joint optimization and the corresponding adaptive design framework in Section~\ref{subsec:joint_optimization} therefore provide a principled approach to
navigating these trade-offs.

\section{Simulation Studies}
\label{sec:simulation}
We conduct simulation studies to evaluate the proposed longitudinal designs and estimators in the illustrative example of learning multiple treatment-initiation effects
$(\psi^1,\ldots,\psi^K)$. Specifically, we assess the efficiency gains of the proposed adaptive designs developed in Section~\ref{sec:general_framework} relative
to fixed non-adaptive randomization and the oracle design, as well as the finite-sample performance of the proposed estimators.

\paragraph{Setup.}
\looseness=-1
In the primary simulations, $n=2000$ participants are enrolled
sequentially in $T=5$ equally sized cohorts of 400 participants. We consider
longitudinal settings with either two or three treatment stages
($K=2$ and $K=3$), with discrete time-varying covariates and a continuous outcome. The complete data-generating
mechanisms are provided in Section~\ref{supp_sec:dgd} of the Supplementary
Material.

For each setting, we consider three user-specified weight vectors
$w=(w^1,\ldots,w^K)$ in the joint efficiency criterion. For $K=2$, we use
$(1/2,1/2)$, $(1/3,2/3)$, and $(2/3,1/3)$; for $K=3$, we use
$
(1/3,1/3,1/3)$, $(1/6,1/3,1/2)$ and $(1/2,1/3,1/6)$.

We compare the following three designs:
(a) a fixed non-adaptive design that assigns treatment with probability
$0.5$ at every stage;
(b) an oracle design defined by recursively computing the optimal stage-specific randomization
probabilities using the true data-generating mechanism; and
(c) the proposed adaptive design, which starts from equal randomization and then estimates and tracks the oracle design using the accrued data.

\paragraph{Estimation.}
\looseness=-1
For each Monte Carlo replicate and design, we estimate the collection of
treatment-initiation effects
$\{\psi^\tau:\tau\in[K]\}$ using the ADL-LTMLE developed in
Section~\ref{section:adl-ltmle}, with the outcome rescaled to $[0,1]$ using the observed minimum and maximum
and the resulting estimates transformed back to the original scale. In the primary simulations, the initial
sequential outcome regressions are estimated using empirical means within
strata defined by the observed covariate and treatment histories available
before each treatment stage.
As a comparator, we also implement an adaptive-design LTMLE
(AD-LTMLE) based on the participant-specific randomization mechanisms
$\bfg_i$, which solves
$
\frac{1}{n}\sum_{i=1}^n
D^\tau\!\left(P_{\bfQ_n^*,\bfg_i}\right)(O_i)=0$.
This estimator is asymptotically normal under standard conditions for
adaptive experiments, but it may exhibit greater finite-sample variability
and require stronger design-stability conditions than the ADL-LTMLE based on the
design $\bar{\bfg}_n$ induced by the
average-reference law $\bar P_{\bfQ,\bfg_{1:n}}$  \citep{zhang2025efficient}.
The variance estimators for the ADL-LTMLE and AD-LTMLE are 
$
\frac{1}{n}
\sum_{t=1}^T
\frac{n_t}{n}\,
\widehat{\operatorname{Var}}_{i\in\mathcal I_t}
D^\tau\!\left(P_{\bfQ_n^*,\bar{\bfg}_n}\right)
$
and
$
\frac{1}{n}
\sum_{t=1}^T
\frac{n_t}{n}\,
\widehat{\operatorname{Var}}_{i\in\mathcal I_t}
D^\tau\!\left(P_{\bfQ_n^*,\bfg_i}\right),
$
respectively, where $\mathcal I_t$ denotes the set of participant indices in cohort $t$.
Under the non-adaptive and oracle designs, where the randomization
strategy is fixed, the two estimators coincide with the standard
i.i.d.\ LTMLE \citep{van2011LTMLE}.
\looseness = -1
We also evaluate two one-step estimators constructed using $\bar{\bfg}_n$ and $\bfg_i$, respectively, as analogues of ADL-LTMLE and
AD-LTMLE, with more details provided in the
Supplementary Material.

\paragraph{Results.}
Tables~\ref{tab:K2_bar_g_ltmle} and~\ref{tab:K3_bar_g_ltmle}
summarize the results over 500 Monte Carlo replicates for $K=2$ and
$K=3$, respectively. The columns Var$_j$ report the empirical variances
of the estimators of $\psi^j$, whereas TarVar reports the corresponding
weighted average of these variances using the weight vector specified in
the same row. The columns Cov$_j$ report the empirical coverage of $95\%$ Wald confidence intervals, whereas OCov$_j$ report the coverage obtained by using the Monte Carlo standard deviation
of the point estimates to construct the confidence intervals.

For $K=2$, the proposed adaptive design yields substantial efficiency
gains relative to the fixed design under both estimators. For
$w=(1/2,1/2)$, the values of TarVar under the fixed design, adaptive design with AD-LTMLE, adaptive design with 
ADL-LTMLE, and the oracle design are $0.029$, $0.022$,
$0.019$, and $0.017$, respectively. For the other weights, the corresponding values are
$0.026$, $0.019$, $0.017$, and $0.015$ for $w=(1/3,2/3)$, and
$0.033$, $0.025$, $0.021$, and $0.019$ for $w=(2/3,1/3)$.
Thus, even when paired with the AD-LTMLE, the adaptive design reduces
TarVar by approximately $24\%$--$26\%$ relative to the fixed design. Under the same adaptive design,
the ADL-LTMLE further reduces TarVar, resulting in overall reductions of approximately
$34\%$--$36\%$ relative to the fixed design and bringing performance
closer to that of the oracle design, which  yields variance reduction by approximately $41\%$--$42\%$.

The efficiency gains of the proposed designs are more substantial for $K=3$. For
$w=(1/3,1/3,1/3)$, the TarVar values under the fixed design, adaptive design with AD-LTMLE,
adaptive design with ADL-LTMLE, and oracle design are $0.056$, $0.031$,
$0.026$, and $0.022$, respectively. The corresponding values are
$0.062$, $0.032$, $0.024$, and $0.021$ for
$w=(1/6,1/3,1/2)$, and $0.050$, $0.029$, $0.024$, and $0.021$
for $w=(1/2,1/3,1/6)$. Pairing the proposed adaptive design with the
AD-LTMLE therefore reduces TarVar by approximately $42\%$--$48\%$
relative to the fixed design. The ADL-LTMLE provides a further variance reduction under the same adaptive design, yielding
overall reductions of approximately $52\%$--$61\%$. The resulting performance is close to the corresponding oracle benchmark,
which reduces TarVar by approximately $58\%$--$66\%$. 

The results also show that changing the weight vector shifts estimation efficiency toward the estimands assigned larger weights. For
$K=2$, under the oracle design, increasing
the weight on $\psi^2$ from $w=(2/3,1/3)$ to $w=(1/3,2/3)$ reduces
Var$_2$ from $0.012$ to $0.010$, while Var$_1$ increases slightly from $0.023$ to $0.024$.
The trade-offs are more pronounced for $K=3$. Shifting from equal weights to $w=(1/6,1/3,1/2)$ prioritizes
$\psi^3$ and reduces Var$_3$ from $0.020$ to $0.017$ in the oracle design,
while Var$_1$ increases from $0.020$ to $0.026$. Conversely, shifting from equal weights to $w=(1/2,1/3,1/6)$ prioritizes
$\psi^1$ and reduces Var$_1$ from $0.020$ to
$0.016$, while Var$_3$ increases from $0.020$ to
$0.025$. The adaptive designs follow the same qualitative pattern and achieve performance close to that of the oracle design.

Supplementary Tables S3--S6 summarize
sensitivity analyses with respect to the sample size and the model specification. For both $K=2$ and $K=3$, TarVar decreases substantially as the total sample size increases from 1,000 to 2,000 and then to 4,000. While misspecification of the outcome regressions increases variance, coverage remains near the nominal level across model specifications and sample sizes, supporting the robustness of ADL-LTMLE to outcome-regression misspecification.

\input{table_K2_LTMLE}
\input{table_K3_LTMLE}
\section{Conclusions}

\label{sec:conclusion}
In this work, we develop a general framework for constructing longitudinal adaptive
experiments aimed at learning one or more statistical estimands with semiparametric efficient inference.
We characterize oracle longitudinal designs that minimize semiparametric efficiency
bounds using backward-recursive solutions of treatment randomization schemes across treatment stages, and propose an extended design optimization criterion that balances
estimation efficiency of multiple estimands and navigates their trade-offs.
This approach applies broadly to the identification of oracle designs and the construction of adaptive designs for a wide class of target estimands, including various timing-of-treatment-initiation effects, treatment-specific means at multiple time points, and mean outcomes under dynamic or stochastic treatment rules. 
We further establish an asymptotically normal, semiparametric efficient ADL-LTMLE framework for asymptotically
normal and semiparametric efficient estimation of statistical estimands from dependent data collected from adaptive experiments, without relying on parametric model assumptions. 
Simulation studies show that the
proposed longitudinal designs achieve substantial efficiency gains over non-adaptive
designs with the performance close to that of the unknown oracle design, especially when combined with the efficient ADL-LTMLE.  

Overall, this work contributes a general efficiency-oriented framework for the design and estimation of adaptive experiments in longitudinal settings involving multi-stage, time-varying treatments. Future work may explore applications of the proposed framework to other
statistical learning objectives and extend it to a broader range of experimental settings and application domains.

\clearpage
\section*{Supplementary Material}

\bibliographystyle{chicago}      
\setcounter{section}{0}
\setcounter{equation}{0}
\def\theequation{S\arabic{section}.\arabic{equation}}
\def\thesection{S\arabic{section}}

\fontsize{12}{14pt plus.8pt minus .6pt}\selectfont

\setcounter{table}{0}
\renewcommand{\thetable}{S\arabic{table}}

\setcounter{assumption}{0}
\renewcommand{\theassumption}{S\arabic{assumption}}

\section{Theoretical Analyses of ADL-LTMLE}
This section provides theoretical analyses of ADL-LTMLE. We first characterize the treatment randomization mechanisms induced the reference distribution used to construct the estimator, then establish its asymptotic normality and semiparametric efficiency.
\subsection{Stage-wise Treatment Distribution under the Reference Distribution}
\label{supp_sec:bar-g-derivation}
Recall that 
$
\bar P_{\bfQ,\bfg_{1:n}}
:=
\frac{1}{n}\sum_{i=1}^n P_{\bfQ,\bfg_i}$,
where
\(\bfg_i=(g_{1,i},\ldots,g_{K,i})\)
denotes the collection of
randomization mechanisms used to generate participant \(i\)'s treatment sequence. Here,
$\bfQ=(Q_1,Q_2,\ldots,Q_{K+1})$
collects the stage-wise nuisance components, with densities denoted by
\(q_1,\ldots,q_{K+1}\). The density of
\(\bar P_{\bfQ,\bfg_{1:n}}\) is
\[
\bar p_{\bfQ,\bfg_{1:n}}(o)
=
\frac{1}{n}\sum_{i=1}^n
\left[
q_1(l_1)
\prod_{k=1}^{K}
g_{k,i}(a_k\mid h_k)
q_{k+1}(l_{k+1}\mid a_k,h_k)
\right].
\]

For any fixed longitudinal treatment mechanism
$
\bfg=(g_1,\ldots,g_K),
$
define its cumulative treatment probability through stage \(k\) by
$
G_k^{\bfg}(\bar a_k\mid h_k)
:=
\prod_{s=1}^k
g_s(a_s\mid h_s).
$
For participant \(i\), write
\[
G_{k,i}(\bar a_k\mid h_k)
:=
G_k^{\bfg_i}(\bar a_k\mid h_k)
=
\prod_{s=1}^k
g_{s,i}(a_s\mid h_s).
\]

For any \(1\leq k\leq K\), the joint density of
\((\bar L_k,\bar A_k)\) under
\(\bar P_{\bfQ,\bfg_{1:n}}\), obtained by marginalizing over future variables,
is
\[
p_{\bar P}(\bar l_k,\bar a_k)
=
\left[
q_1(l_1)
\prod_{s=2}^{k}
q_s(l_s\mid a_{s-1},h_{s-1})
\right]
\left[
\frac{1}{n}\sum_{i=1}^n
G_{k,i}(\bar a_k\mid h_k)
\right].
\]
Similarly,
\[
p_{\bar P}(\bar l_k,\bar a_{k-1})
=
\left[
q_1(l_1)
\prod_{s=2}^{k}
q_s(l_s\mid a_{s-1},h_{s-1})
\right]
\left[
\frac{1}{n}\sum_{i=1}^n
G_{k-1,i}(\bar a_{k-1}\mid h_{k-1})
\right]
\]
for $2 \leq k \leq K$.
For $k=1$, $p_{\bar P}(\bar l_k,\bar a_{k-1})$ reduces to $p_{\bar P}(\bar l_k)$, which equals $q_1(l_1)$.

Define the cumulative treatment probability under the finite-sample
average-reference distribution by
$
\bar G_{k,n}(\bar a_k\mid h_k)
:=
\frac{1}{n}\sum_{i=1}^n
G_{k,i}(\bar a_k\mid h_k).
$
Therefore, the stage-wise treatment distribution induced by the
average-reference law for every $k \in [K]$ is
\begin{align}
\bar g_{k,n}(a_k\mid h_k)
=
\frac{
p_{\bar P}(\bar l_k,\bar a_k)
}{
p_{\bar P}(\bar l_k,\bar a_{k-1})
}
=
\frac{
\bar G_{k,n}(\bar a_k\mid h_k)
}{
\bar G_{k-1,n}(\bar a_{k-1}\mid h_{k-1})
}.
\end{align}
In particular,
$
\bar g_{1,n}(a_1\mid h_1)
=
\frac{1}{n}\sum_{i=1}^n
g_{1,i}(a_1\mid h_1)$.

Let
$
\bar{\bfg}_n
:=
(\bar g_{1,n},\ldots,\bar g_{K,n})$.
By construction,
\[
G_k^{\bar{\bfg}_n}(\bar a_k\mid h_k)
=
\bar G_{k,n}(\bar a_k\mid h_k)
=
\frac{1}{n}\sum_{i=1}^n
G_k^{\bfg_i}(\bar a_k\mid h_k).
\]
Thus, the cumulative treatment probability induced by the average-reference
distribution \(\bar P_{\bfQ,\bfg_{1:n}}\) is the average of the
participant-specific cumulative treatment probabilities. For notational
convenience in the following sections, we therefore write
$
P_{\bfQ,\bar{\bfg}_n}
:=
\bar P_{\bfQ,\bfg_{1:n}}$,
which infers a fixed reference experiment consists of independent
observations drawn from \(P_{\bfQ,\bar{\bfg}_n}\) where each participant has stage-wise treatments sequentially generated according to
the fixed mechanism \(\bar{\bfg}_n\). 

\subsection{Asymptotic Normality}
We use the example of the mean outcome under a longitudinal dynamic treatment rule
\(\bd=(d_1,\ldots,d_K)\) as the target estimand of interest, which falls within the broader ADL-TMLE framework of
\citet{zhang2025efficient} that considers general target parameters of the
form $\Psi(\bfQ)$ for data likelihoods factorizing as
$p(o_1,\ldots,o_n)=\prod_{i=1}^n \mathbf q(o_i) \bfg_i(o_i)$, where $\mathbf q$ is the density of some
$\bfQ \in \mathcal Q$ that defines the target estimand.

Let 
$\psi^{\bd}(P_{\bfQ,\bfg})=E_{P_{\bfQ,\bfg}}(Y^{\bd})$ denote the mean counterfactual outcome under \(\bd\).
Since the estimand is only defined on $P_{\bfQ,\bfg}$ through $\bfQ$, we also express the estimand by $\psi^{\bd}(\bfQ)$.

\begin{theorem}[Decomposition of the ADL-LTMLE]
\label{theorem:longitudinal-tmle-decomposition}
\[
\psi^{\bd}(\bfQ_n^*)
-
\psi^{\bd}(\bfQ_0)
=
M_{1,n}^{\bd}(\bfQ_\infty, \bar{\bfg}_\infty)
+
M_{2,n}^{\bd}(\bfQ_n^*,\bfQ_{\infty},\bar{\bfg}_n)
+
M_{3,n}^{\bd}(\bfQ_\infty,\bar{\bfg}_n,\bar{\bfg}_{\infty}),
\]
where
\[
M_{1,n}^{\bd}(\bfQ_\infty, \bar{\bfg}_\infty)
:=
\frac{1}{n}\sum_{i=1}^n
\left[
D^{\bd}(P_{\bfQ_\infty,\bar{\bfg}_\infty})(O_i)
-
P_{\bfQ_0,\bfg_i}
D^{\bd}(P_{\bfQ_\infty,\bar{\bfg}_\infty})
\right],
\]
\[
\begin{aligned}
M_{2,n}^{\bd}(\bfQ_n^*,\bfQ_{\infty},\bar{\bfg}_n)
&:=
\frac{1}{n}\sum_{i=1}^n
\Bigg\{
\left[
D^{\bd}(P_{\bfQ_n^*,\bar{\bfg}_n})(O_i)
-
P_{\bfQ_0,\bfg_i}
D^{\bd}(P_{\bfQ_n^*,\bar{\bfg}_n})
\right]
\\
&\hspace{5.2em}-
\left[
D^{\bd}(P_{\bfQ_\infty,\bar{\bfg}_n})(O_i)
-
P_{\bfQ_0,\bfg_i}
D^{\bd}(P_{\bfQ_\infty,\bar{\bfg}_n})
\right]
\Bigg\},
\end{aligned}
\]
and
\[
\begin{aligned}
M_{3,n}^{\bd}(\bfQ_\infty,\bar{\bfg}_n,\bar{\bfg}_{\infty})
&:=
\frac{1}{n}\sum_{i=1}^n
\Bigg\{
\left[
D^{\bd}(P_{\bfQ_\infty,\bar{\bfg}_n})(O_i)
-
P_{\bfQ_0,\bfg_i}
D^{\bd}(P_{\bfQ_\infty,\bar{\bfg}_n})
\right]
\\
&\hspace{5.2em}-
\left[
D^{\bd}(P_{\bfQ_\infty,\bar{\bfg}_\infty})(O_i)
-
P_{\bfQ_0,\bfg_i}
D^{\bd}(P_{\bfQ_\infty,\bar{\bfg}_\infty})
\right]
\Bigg\}.
\end{aligned}
\]
\end{theorem}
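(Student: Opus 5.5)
The plan is to show that the three terms telescope and that the remaining bias is exactly zero, because the treatment mechanism inside the influence curve is the same one under which the expectation is taken. Throughout, for a possibly data-dependent function $f$, read $P_{\bfQ_0,\bfg_i}f$ as the integral of $f$ with $f$ held fixed (equivalently, the conditional expectation given $\mathcal F_{i-1}$ with $f$ frozen). This is the convention already used in the definitions of the $M^{\bd}_{j,n}$.

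\emph{Step 1 (telescoping).} Adding $M_{1,n}^{\bd}$, $M_{2,n}^{\bd}$ and $M_{3,n}^{\bd}$, the terms involving $D^{\bd}(P_{\bfQ_\infty,\bar{\bfg}_\infty})$ and $D^{\bd}(P_{\bfQ_\infty,\bar{\bfg}_n})$ cancel. The sum is therefore
\[
\frac1n\sum_{i=1}^n D^{\bd}(P_{\bfQ_n^*,\bar{\bfg}_n})(O_i)\;-\;\frac1n\sum_{i=1}^n P_{\bfQ_0,\bfg_i}D^{\bd}(P_{\bfQ_n^*,\bar{\bfg}_n}).
\]
The first average is zero, since the targeting construction of Section~\ref{subsec:adl-ltmle-procedure} makes $\bfQ_n^*$ solve this efficient influence curve equation. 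For the second average, the density of $\frac1n\sum_i P_{\bfQ_0,\bfg_i}$ is linear in the $\bfg_i$. By Section~\ref{supp_sec:bar-g-derivation}, this average equals $P_{\bfQ_0,\bar{\bfg}_n}$. Hence the sum of the three terms equals $-P_{\bfQ_0,\bar{\bfg}_n}D^{\bd}(P_{\bfQ_n^*,\bar{\bfg}_n})$.

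\emph{Step 2 (exact remainder identity).} It remains to show that for any $\bfQ$ with sequential regressions $m_k^{\bd}$, and any $\bfg$ satisfying the positivity in Assumption~\ref{assumption:longitudinal-average-reference-positivity} (here $\bfg=\bar{\bfg}_n$),
\[
P_{\bfQ_0,\bfg}D^{\bd}(P_{\bfQ,\bfg})=\psi^{\bd}(\bfQ_0)-\psi^{\bd}(\bfQ).
\]

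For the $k$-th weighted residual term, iterate conditional expectations under $P_{\bfQ_0,\bfg}$ through stages $1,\ldots,k$. Each factor $g_s\{d_s(H_s)\mid H_s\}$ in the denominator cancels exactly against the probability of $A_s=d_s(H_s)$ under the same $g_s$. This cancellation is the key point: the $\bfg$ in the influence curve and the $\bfg$ generating the data coincide, so no $(\bfg-\bfg_0)$ cross term arises. The $k$-th term therefore becomes
\[
E^{\bd}_{\bfQ_0}\bigl[m_{k+1}^{\bd}\{d_{k+1}(H_{k+1}),H_{k+1}\}-m_k^{\bd}\{d_k(H_k),H_k\}\bigr],
\]
an expectation under the intervention law $P^{\bd}_{\bfQ_0}$. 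Here $m^{\bd}_{K+1}\equiv Y$, and the $m_k^{\bd}$ are those of $\bfQ$, not $\bfQ_0$.

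Summing over $k=1,\ldots,K$ telescopes to $E^{\bd}_{\bfQ_0}[Y]-E_{\bfQ_0}[m_1^{\bd}\{d_1(H_1),H_1\}]$. The baseline term contributes $E_{\bfQ_0}[m_1^{\bd}\{d_1(H_1),H_1\}]-\psi^{\bd}(\bfQ)$. Adding the two and using $E^{\bd}_{\bfQ_0}[Y]=\psi^{\bd}(\bfQ_0)$ by the $g$-computation formula gives the identity.

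\emph{Step 3 (conclusion).} Apply Step 2 with $\bfQ=\bfQ_n^*$ and $\bfg=\bar{\bfg}_n$. The sum of the three terms is then $-(\psi^{\bd}(\bfQ_0)-\psi^{\bd}(\bfQ_n^*))$, which is the claimed decomposition.

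\emph{Main obstacle.} The delicate step is Step 2, specifically verifying that the intermediate conditional expectations are well defined. Positivity of $\bar G^{\bd}_{k,n}$ is what makes the inverse weights finite and lets the indicator-over-probability factors integrate to one along the rule $\bd$. Relatedly, Step 1 requires confirming that $P_{\bfQ_0,\bar{\bfg}_n}$ is exactly the average of the $P_{\bfQ_0,\bfg_i}$ under the frozen-$f$ convention. This follows from the derivation of $\bar g_{k,n}$ as ratios of averaged cumulative probabilities, but should be stated explicitly.
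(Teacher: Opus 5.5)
Your proposal is correct and follows the paper's argument: both reduce the claim to the exact identity $\psi^{\bd}(\bfQ)-\psi^{\bd}(\bfQ_0)=-P_{\bfQ_0,\bar{\bfg}_n}D^{\bd}(P_{\bfQ,\bar{\bfg}_n})$, using the estimating equation $\frac{1}{n}\sum_{i} D^{\bd}(P_{\bfQ_n^*,\bar{\bfg}_n})(O_i)=0$, the averaging identity $\frac{1}{n}\sum_i P_{\bfQ_0,\bfg_i}=P_{\bfQ_0,\bar{\bfg}_n}$, and add-and-subtract telescoping. The paper obtains that identity from per-participant von Mises expansions, whose remainders $R_i^{\bd}$ involve $G_{k,i}^{\bd}-\bar G_{k,n}^{\bd}$ and cancel after averaging; you instead average the laws first and verify directly that the remainder vanishes when the mechanism inside the EIC matches the one generating the data, which is slightly more self-contained, and your explicit frozen-$f$ convention is what the paper uses implicitly.
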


\begin{proof}
Define the sequential
regressions recursively by setting
\[
m_{K+1}^{\bd}\{d_{K+1}(H_{K+1}),H_{K+1}\}=Y,
\]
and, for \(k=1,\ldots,K\),
\[
m_k^{\bd}(a_k,h_k)
=
E_{\bfQ}
\left[
m_{k+1}^{\bd}\{d_{k+1}(H_{k+1}),H_{k+1}\}
\mid
A_k=a_k,\ H_k=h_k
\right].
\]

For any longitudinal treatment mechanism \(\bfg\), define
$
G_k^{\bfg,\bd}(H_k)
:=
\prod_{s=1}^k
g_s\{d_s(H_s)\mid H_s\}$.
For participant \(i\), we write
$
G_{k,i}^{\bd}(H_k)
:=
G_k^{\bfg_i,\bd}(H_k)$.
The corresponding cumulative probability under the finite-sample
average-reference mechanism is
$
\bar G_{k,n}^{\bd}(H_k)
:=
G_k^{\bar{\bfg}_n,\bd}(H_k)
=
\frac{1}{n}\sum_{i=1}^n
G_{k,i}^{\bd}(H_k)$.

The treatment-rule-specific EIC evaluated under any fixed $P_{\bfQ,\bfg}$ is
\[
\begin{aligned}
D^{\bd}(P_{\bfQ,\bfg})(O)
&=
m_1^{\bd}\{d_1(H_1),H_1\}
-
\psi^{\bd}(\bfQ)
\\
&\quad+
\sum_{k=1}^{K}
\frac{
\I\{\bar A_k=\bar d_k\}
}{
\bar G_{k,n}^{\bd}(H_k)
}
\left[
m_{k+1}^{\bd}\{d_{k+1}(H_{k+1}),H_{k+1}\}
-
m_k^{\bd}(A_k,H_k)
\right].
\end{aligned}
\]
Under the event \(\I\{\bar A_k=\bar d_k\}=1\), the term
\(m_k^{\bd}(A_k,H_k)\) equals \(m_k^{\bd}\{d_k(H_k),H_k\}\).

For participant \(i\), define the cumulative probability of following
\(\bd\) through stage \(k\) under that participant's randomization mechanism by
$
G_{k,i}^{\bd}(H_k)
=
\prod_{s=1}^k
g_{s,i}\{d_s(H_s)\mid H_s\}$.
We also define the stage-\(k\) residual under \(\bfQ\) by
$
\Delta_k^{\bd}(\bfQ)
:=
m_{k+1}^{\bd}\{d_{k+1}(H_{k+1}),H_{k+1}\}
-
m_k^{\bd}\{d_k(H_k),H_k\}.
$

For each \(i\), applying von Mises expansion with respect to $P_{\bfQ,\bar{\bfg}_n}$ and $P_{\bfQ_0,\bfg_i}$ yields
\[
\begin{aligned}
\psi^{\bd}(P_{\bfQ,\bar{\bfg}_n})-\psi^{\bd}(P_{\bfQ_0,\bfg_i})
&=
\int
D^{\bd}(P_{\bfQ,\bar{\bfg}_n})(o)
\,d\{P_{\bfQ,\bar{\bfg}_n}-P_{\bfQ_0,\bfg_i}\}(o)
\\
&\quad+
R_i^{\bd}(\bfQ,\bfQ_0;\bar{\bfg}_n,\bfg_i).
\end{aligned}
\]

For the mean outcome under a longitudinal dynamic treatment rule $\bd$, the remainder takes the form
\[
R_i^{\bd}(\bfQ,\bfQ_0;\bar{\bfg}_n,\bfg_i)
=
\sum_{k=1}^{K}
E_{\bfQ_0}^{\bd}
\left[
\frac{
G_{k,i}^{\bd}(H_k) - \bar G_{k,n}^{\bd}(H_k)
}{
\bar G_{k,n}^{\bd}(H_k)
}
\Delta_k^{\bd}(\bfQ)
\right],
\]
where
$
G_{k,i}^{\bd}(H_k)
=
\prod_{s=1}^{k}
g_{s,i}\{d_s(H_s)\mid H_s\},
$
$
\bar G_{k,n}^{\bd}(H_k)
=
\frac{1}{n}\sum_{i=1}^{n}G_{k,i}^{\bd}(H_k),
$
and
$
\Delta_k^{\bd}(\bfQ)
=
m_{k+1}^{\bd}\{d_{k+1}(H_{k+1}),H_{k+1}\}
-
m_k^{\bd}\{d_k(H_k),H_k\}.
$

Here, \(E_{\bfQ_0}^{\bd}\) denotes expectation under the distribution 
that sets treatment according to \(\bd\) while keeping the longitudinal
covariate and outcome law at \(\bfQ_0\). 
By averaging over $i = 1,\ldots,n$, we obtain
\[
\begin{aligned}
\psi^{\bd}(\bfQ)-\psi^{\bd}(\bfQ_0)
&=
\frac{1}{n}\sum_{i=1}^{n}
\int
D^{\bd}(P_{\bfQ,\bar{\bfg}_n})(o)
\,d\{P_{\bfQ,\bar{\bfg}_n}-P_{\bfQ_0,\bfg_i}\}(o)
\\
&\quad+
\frac{1}{n}\sum_{i=1}^{n}
\sum_{k=1}^{K}
E_{\bfQ_0}^{\bd}
\left[
\frac{
G_{k,i}^{\bd}(H_k) - \bar G_{k,n}^{\bd}(H_k)
}{
\bar G_{k,n}^{\bd}(H_k)
}
\Delta_k^{\bd}(\bfQ)
\right] \\
&=
\frac{1}{n}\sum_{i=1}^{n}
\int
D^{\bd}(P_{\bfQ,\bar{\bfg}_n})(o)
\,d\{P_{\bfQ,\bar{\bfg}_n}-P_{\bfQ_0,\bfg_i}\}(o)
\\
&= - P_{\bfQ_0,\bar{\bfg}_n}
D^{\bd}(P_{\bfQ,\bar{\bfg}_n}),
\end{aligned}
\]
where the second equality holds since 
$\bar G_{k,n}^{\bd}(H_k)
=
\frac{1}{n}\sum_{i=1}^{n}G_{k,i}^{\bd}(H_k)$, and the last equality uses the fact that 
$
P_{\bfQ,\bfg}
D^{\bd}(P_{\bfQ,\bfg})=0$ for any $P_{\bfQ,\bfg}$.

Since an ADL-LTMLE for estimating $\psi^{\bd}(\bfQ_0)$ updates an initial estimate $\bfQ_n$ to obtain a \(\bfQ_n^*\) that solves
$
\frac{1}{n} \sum_{i=1}^n D^{\bd}(P_{\bfQ_n^*,\bar{\bfg}_n})(O_i)
= 0$, we
have
\begin{equation*}
    \psi^{\bd}(\bfQ_n^*)-\psi^{\bd}(\bfQ_0)
= \frac{1}{n} \sum_{i=1}^n D^{\bd}(P_{\bfQ_n^*,\bar{\bfg}_n}) - P_{\bfQ_0,\bar{\bfg}_n}
D^{\bd}(P_{\bfQ_n^*,\bar{\bfg}_n}).
\end{equation*}

The final decomposition follows from some algebra that adds and subtracts other corresponding terms. 
\end{proof}

\begin{theorem}[Asymptotic Normality of $M_{1,n}^{\bd}$]
\label{theorem:longitudinal-asymptotic-normality-M1}
Suppose Assumptions~\ref{assumption:longitudinal-average-reference-positivity} and \ref{assumption:longitudinal-stabilized-variance-d} hold, then 
$
\sqrt{n} M_{1,n}^{\bd}(\bfQ_\infty, \bar{\bfg}_\infty) \dto N(0, \sigma_{\bd}^2).
$
\end{theorem}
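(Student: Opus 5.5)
The plan is to write $\sqrt n M_{1,n}^{\bd}(\bfQ_\infty,\bar{\bfg}_\infty)=\sum_{i=1}^n \xi_{n,i}$ with
\[
\xi_{n,i}:=n^{-1/2}\left[D^{\bd}(P_{\bfQ_\infty,\bar{\bfg}_\infty})(O_i)-P_{\bfQ_0,\bfg_i}D^{\bd}(P_{\bfQ_\infty,\bar{\bfg}_\infty})\right],
\]
and apply a martingale central limit theorem for triangular arrays (e.g.\ the Lindeberg-type martingale CLT of Hall and Heyde, or the version used in \citet{zhang2025efficient}). The key structural fact is that $f:=D^{\bd}(P_{\bfQ_\infty,\bar{\bfg}_\infty})$ is a \emph{fixed}, nonrandom function: $\bfQ_\infty$ and $\bar{\bfg}_\infty$ are deterministic limits. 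Since $\bfg_i$ is $\mathcal F_{i-1}$-measurable and $O_i$ given $\mathcal F_{i-1}$ has law $P_{\bfQ_0,\bfg_i}$, we get $E(\xi_{n,i}\mid\mathcal F_{i-1})=0$. Thus $\{\xi_{n,i},\mathcal F_i\}$ is a martingale difference array.

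Next I would verify the two conditions of the martingale CLT. The first is the conditional variance condition. We have $\sum_i E(\xi_{n,i}^2\mid\mathcal F_{i-1})=n^{-1}\sum_i \operatorname{Var}_{P_{\bfQ_0,\bfg_i}}(f)$, which converges in probability to $\sigma_{\bd}^2$ exactly by Assumption~\ref{assumption:longitudinal-stabilized-variance-d}. The second is the conditional Lindeberg condition, which I would obtain from uniform boundedness of $f$. Since $Y\in[0,1]$, each $m_k^{\bd}$ under $\bfQ_\infty$ lies in $[0,1]$, so every residual $m_{k+1}^{\bd}-m_k^{\bd}$ is bounded by $1$. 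By Assumption~\ref{assumption:longitudinal-average-reference-positivity}, the cumulative weights satisfy $G_k^{\bar{\bfg}_\infty,\bd}(h_k)\ge\zeta$ uniformly in $h_k$. Hence $|f|\le 2+K/\zeta=:C$, and $|\xi_{n,i}|\le 2C/\sqrt n$. For any $\epsilon>0$ the indicator $\I(|\xi_{n,i}|>\epsilon)$ then vanishes once $n>4C^2/\epsilon^2$, so the Lindeberg sum is eventually identically zero.

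With both conditions in hand, the martingale CLT gives $\sum_i\xi_{n,i}\dto N(0,\sigma_{\bd}^2)$. If the CLT version used requires nested $\sigma$-fields, these hold trivially because $\mathcal F_i$ does not depend on $n$. If it requires a deterministic limit of the conditional variance, that limit is the constant $\sigma_{\bd}^2$.

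The step I expect to need the most care is ensuring that the function in $M_{1,n}^{\bd}$ involves no $n$-dependent or data-dependent quantities. In particular, the EIC must be evaluated with the limiting weights $G_k^{\bar{\bfg}_\infty,\bd}$ rather than $\bar G_{k,n}^{\bd}$, which is what makes the martingale-difference property exact. The same point justifies invoking the second half of Assumption~\ref{assumption:longitudinal-average-reference-positivity} for the boundedness bound. Everything else is routine.
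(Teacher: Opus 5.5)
Your proposal is correct and follows the same route as the paper. Both write $\sqrt n\,M_{1,n}^{\bd}$ as a sum of martingale differences built from the fixed function $D^{\bd}(P_{\bfQ_\infty,\bar{\bfg}_\infty})$, get the conditional Lindeberg condition from the uniform boundedness implied by Assumption~\ref{assumption:longitudinal-average-reference-positivity}, and take the conditional-variance limit from Assumption~\ref{assumption:longitudinal-stabilized-variance-d} before invoking the martingale CLT. You add details the paper leaves implicit, namely the explicit bound on the EIC (your constant $2+K/\zeta$ is loose but harmless) and the point that only the limiting weights $G_k^{\bar{\bfg}_\infty,\bd}$ enter, so the second half of the positivity assumption is the part that matters.
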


\begin{proof}
Note that 
\(M_{1,n}^{\bd}(\bfQ_\infty, \bar{\bfg}_\infty)\) is an average of martingale differences sequences. 
Under Assumption~\ref{assumption:longitudinal-average-reference-positivity},
$D^{\bd}(P_{\bfQ_\infty,\bar{\bfg}_\infty})$
is uniformly bounded, therefore satisfying the conditional Lindeberg condition, i.e., for every $\epsilon >0$, $
\frac{1}{n}\sum_{i=1}^n
E\left[
Z_{n,i}^2
\I\left\{
|Z_{n,i}|>\epsilon\sqrt n
\right\}
\mid
\bO(i-1)
\right]
\pto  0,
$
where
$
Z_{n,i}
=
D(P_{\bfQ_\infty,\bar{\bfg}_\infty})(O_i)
-
P_{\bfQ_0,\bfg_i}
D(P_{\bfQ_\infty,\bar{\bfg}_\infty})$. 

Combined with the variance stabilization assumption stated in Assumption \ref{assumption:longitudinal-stabilized-variance-d}, one can directly apply Martingale Central Limit Theorem (see, e.g., \cite{hall2014martingale}) and obtain the asymptotic normality. 
\end{proof}

Now we establish equicontinuity conditions for \(M_{2,n}^{\bd}\) and \(M_{3,n}^{\bd}\). 

For \(\bfQ=(Q_1,\ldots,Q_{K+1})\in\cQ\), define the collection of sequential regression
functions associated with rule \(\bd\) by
$
\mathbf m^{\bd}(\bfQ)
=
\left(
m_1^{\bd}(\bfQ),\ldots,m_K^{\bd}(\bfQ)
\right)$,
and let
$
\mathcal M^{\bd}
=
\left\{
\mathbf m^{\bd}(\bfQ):
\bfQ\in\cQ
\right\}$ equipped with the supremum norm defined by 
$
\left\|
\mathbf m^{\bd}
\right\|_\infty
=
\max_{k\in[K]}
\left\|
m_k^{\bd}
\right\|_\infty$.
Let \(\cG\) denote the fixed class of treatment mechanisms satisfying Assumption \ref{assumption:longitudinal-average-reference-positivity}, equipped
with
$
\left\|
\bfg
\right\|_\infty
=
\max_{k\in[K]}
\sup_{a_k,h_k}
\left|
g_k(a_k\mid h_k)
\right|$.
For a class of functions \(\mathcal H\), let
\(N_\infty(u,\mathcal H)\) denote its \(u\)-covering number under the
corresponding supremum norm, and define the entropy integral 
$
J_\infty(\epsilon,\mathcal H)
:=
\int_0^\epsilon
\sqrt{
\log\left[
1+N_\infty(u,\mathcal H)
\right]
}
\,du$.

\begin{assumption}[Entropy Conditions]
\label{assumption:longitudinal-equicontinuity-conditions}

$J_\infty(\epsilon,\mathcal M^{\bd})$ and 
$J_\infty(\epsilon,\cG)$ are finite and converge to zero as $\epsilon \to 0$.
\end{assumption}
\begin{assumption}[Convergence of $\bfQ_n^*$]
\label{assumption:longitudinal-equicontinuity-convergence-Q}
Define $\rho(x)=e^x-x-1$ and $\widetilde D^{\bd}(P_{\bfQ,\bar{\bfg}_n})(O_i)
= m_1^{\bd}(\bfQ)(O_i)+\sum_{k=1}^K D_k^{\bd}(P_{\bfQ,\bar{\bfg}_n})(O_i)$ for any $\bfQ \in \mathcal Q$.
Let  
\[
\begin{aligned}
s_{2,n}^{\bd}
:=
\Bigg\{
\frac{1}{n}\sum_{i=1}^n
E\Bigg[
\rho\Bigg(
\frac{
\big|D^{\bd}(P_{\bfQ^*_n,\bar{\bfg}_n})(O_i)-D^{\bd}(P_{\bfQ_\infty,\bar{\bfg}_n})(O_i)\big|
}{
c_2
}
\Bigg)
\Bigm|
\mathcal F_{i-1}
\Bigg]
\Bigg\}^{1/2},
\end{aligned}
\]
where $c_2 = 8K/\zeta+8K^2/\zeta^2$.
Assume that 
$s_{2,n}^{\bd}
=
O_P(\delta_{2,n})$ for some \(\delta_{2,n}\to0\).
\end{assumption}

\begin{assumption}[Convergence of $\bar{\bfg}_n$]
\label{assumption:longitudinal-equicontinuity-convergence-g}
Define
\[
\begin{aligned}
s_{3,n}^{\bd}
:=
\Bigg\{
\frac{1}{n}\sum_{i=1}^n
E\Bigg[
\rho\Bigg(
\frac{
\left|
D^{\bd}(P_{\bfQ_\infty,\bar{\bfg}_n})(O_i)
-
D^{\bd}(P_{\bfQ_\infty,\bar{\bfg}_\infty})(O_i)
\right|
}{
c_3
}
\Bigg)
\Bigm|
\mathcal F_{i-1}
\Bigg]
\Bigg\}^{1/2},
\end{aligned}
\]
where $c_3 = 4K(K+1)/\zeta^2$.
Assume that
$
s_{3,n}^{\bd}
=
O_P(\delta_{3,n})$ for some 
\(\delta_{3,n}\to0\).
\end{assumption}

We also refer to the following lemma of a general martingale equicontinuity result of \cite{zhang2024evaluating}.
\begin{lemma}
\label{lemma:martingale-equicontinuity}

Let
$
M_n f
=
\frac{1}{n}\sum_{i=1}^n
\left[
f(O_i)-E\{f(O_i)\mid\mathcal F_{i-1}\}
\right]$.
Let $\Theta$ be a index set 
and let
\(\{f_\theta:\theta\in\Theta\}\)
be a uniformly bounded class satisfying
$
\|f_\theta-f_{\theta'}\|_\infty
\leq
L\|\theta-\theta'\|_\infty$.
Define the entropy integral as 
$
J_\infty(\epsilon,\Theta)
=
\int_0^\epsilon
\sqrt{
\log\left\{
1+N_\infty(u,\Theta)
\right\}
}
\,du.
$
Suppose that \(J_\infty(\epsilon,\Theta)<\infty\) is finite and that
\(J_\infty(\epsilon,\Theta)\to0\) as \(\epsilon \to 0\).
Let \(M=\sup_{\theta\in\Theta}\|\theta\|_\infty\),
\(Z=8LM\), and define
$
s_n(f_n)
=
\left\{
\frac{1}{n}\sum_{i=1}^n
E\left[
\rho\left(
\frac{|f_n(O_i)|}{Z}
\right)
\Bigm|
\mathcal F_{i-1}
\right]
\right\}^{1/2}$,
where $\rho(x) = e^x-x-1$.

If \(s_n(f_n)=O_P(\delta_n)\) for some
\(\delta_n\to0\), then
$M_n f_n=o_P(n^{-1/2})$.
\end{lemma}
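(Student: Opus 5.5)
The plan is to prove the lemma with a chaining argument for martingale empirical processes under a Bernstein-type (exponential Orlicz) control of the increments, in the style of van de Geer's exponential inequalities for martingales. Write $f_n = f_{\hat\theta_n}$ for a possibly random index $\hat\theta_n\in\Theta$. Since $\hat\theta_n$ may depend on all the data, we cannot treat $f_n$ as fixed. Instead we control
\[
\sup\Bigl\{ \sqrt n\,|M_n f_\theta| : \theta\in\Theta,\ s_n(f_\theta)\le \delta \Bigr\}
\]
uniformly, and then plug in $\delta=\delta_n$ using $s_n(f_n)=O_P(\delta_n)$. Note that $s_n$ is a random, predictable ``Bernstein norm'': it is built from conditional expectations given $\mathcal F_{i-1}$. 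This is the natural quantity for a martingale Bernstein inequality.

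The first step is the single-function bound. For a fixed bounded $f$ with $\|f\|_\infty\le Z$-scaled, the increments $\xi_i=f(O_i)-E\{f(O_i)\mid\mathcal F_{i-1}\}$ are martingale differences. By Freedman's inequality (or its $\rho$-version), on the event $\{s_n(f)\le \delta\}$,
\[
P\bigl(\sqrt n |M_n f| \ge x\bigr)\le 2\exp\Bigl\{-\frac{x^2}{c(\delta^2 + x Z/\sqrt n)}\Bigr\}.
\]
The function $\rho(x)=e^x-x-1$ appears because $E[\exp(\lambda\xi_i/Z)\mid\mathcal F_{i-1}]\le \exp\{\lambda^2 E[\rho(|f|/Z)\mid\mathcal F_{i-1}]\}$ for $\lambda\le 1$. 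Summing these bounds gives an exponential supermartingale.

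The second step is chaining. The Lipschitz condition $\|f_\theta-f_{\theta'}\|_\infty\le L\|\theta-\theta'\|_\infty$ converts $u$-covers of $\Theta$ into $Lu$-covers of $\{f_\theta\}$ in sup norm. The envelope is bounded by $LM$ up to constants, which explains $Z=8LM$. We also need a relation of the form $s_n(f-g)\lesssim \|f-g\|_\infty/Z$, which holds since $\rho(x)\le x^2 e^x/2$.

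Run a generic chaining with links $f_{\pi_j\theta}-f_{\pi_{j-1}\theta}$ at scales $2^{-j}$. Apply the single-function bound with a union bound over $N_\infty(2^{-j},\Theta)^2$ pairs at each level. The Bernstein linear term $xZ/\sqrt n$ is handled by truncating the chain at the level where $2^{-j}\approx \delta$, or $n^{-1/2}$. This yields
\[
E^*\sup_{s_n(f_\theta)\le\delta}\sqrt n|M_n f_\theta|\lesssim J_\infty(\delta,\Theta)+\frac{J_\infty(\delta,\Theta)^2}{\delta^2\sqrt n},
\]
which is the usual Bernstein-chaining bound. In practice we would cite the martingale version of this bound (van de Geer 2000, Thm 8.13) rather than rederive it.

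The third step concludes. Fix $\eta>0$ and choose $C$ such that $P(s_n(f_n)>C\delta_n)<\eta$. On the complementary event, $\sqrt n|M_nf_n|$ is bounded by the supremum above with $\delta=C\delta_n\to0$. Since $J_\infty(C\delta_n,\Theta)\to0$ by assumption, the first term vanishes. The second term vanishes provided $\delta_n\sqrt n\to\infty$. If that fails, we may replace $\delta_n$ by $\max(\delta_n,n^{-1/4})$, which still tends to zero, so the condition is harmless. Markov's inequality then gives $\sqrt n M_nf_n=o_P(1)$.

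The main obstacle is the first two steps. Because the norm $s_n$ is random and predictable rather than deterministic, the chaining must be done on the event where $s_n$ is controlled, and this requires a careful stopping-time or peeling argument over shells $\{2^{k}\delta< s_n\le 2^{k+1}\delta\}$. I would handle it by invoking the known martingale Bernstein chaining result directly rather than rederiving it.
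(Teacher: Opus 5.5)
Your proposal is correct and follows essentially the same route as the paper: the paper gives no self-contained proof but imports this lemma from \cite{zhang2024evaluating}, which rests on the same kind of martingale Bernstein-type chaining inequality (built on the $\rho$-based conditional norm and sup-norm entropy, as in van de Geer's martingale results) that you plan to invoke, localized on the event $\{s_n(f_n)\le C\delta_n\}$. One small correction: the term $J_\infty(\delta,\Theta)^2/(\delta^2\sqrt n)$ vanishes once $\delta^2\sqrt n$ is bounded away from zero, since it is then at most a constant times $J_\infty(\delta,\Theta)^2\to 0$; the condition $\delta\sqrt n\to\infty$ that you state is not sufficient, but your replacement $\delta=\max(\delta_n,n^{-1/4})$ secures exactly what is needed.
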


\begin{theorem}[Equicontinuity of \(M_{2,n}^{\bd}\) and
\(M_{3,n}^{\bd}\)]
\label{theorem:longitudinal-equicontinuity}

Suppose
Assumptions~\ref{assumption:longitudinal-average-reference-positivity},
\ref{assumption:longitudinal-equicontinuity-conditions},
\ref{assumption:longitudinal-equicontinuity-convergence-Q}, and
\ref{assumption:longitudinal-equicontinuity-convergence-g}
hold. Then
$
M_{2,n}^{\bd}
=
o_P(n^{-1/2})$, $M_{3,n}^{\bd} = o_P(n^{-1/2})$.
\end{theorem}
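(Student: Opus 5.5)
The plan is to reduce both claims to Lemma~\ref{lemma:martingale-equicontinuity}. Each of $M_{2,n}^{\bd}$ and $M_{3,n}^{\bd}$ has the form $M_n f_n$ with $f_n = f_{\theta_n}-f_{\theta_n'}$ for a data-dependent pair of indices. For $M_{2,n}^{\bd}$ we take $f_n=D^{\bd}(P_{\bfQ_n^*,\bar{\bfg}_n})-D^{\bd}(P_{\bfQ_\infty,\bar{\bfg}_n})$. For $M_{3,n}^{\bd}$ we take $f_n=D^{\bd}(P_{\bfQ_\infty,\bar{\bfg}_n})-D^{\bd}(P_{\bfQ_\infty,\bar{\bfg}_\infty})$. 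The natural index set is $\Theta=\mathcal M^{\bd}\times\cG$ with the sup-norm $\|(\mathbf m,\bfg)\|_\infty=\max(\|\mathbf m\|_\infty,\|\bfg\|_\infty)$. Its covering numbers are bounded by products of those of the two factors, so $J_\infty(\epsilon,\Theta)\le J_\infty(\epsilon,\mathcal M^{\bd})+J_\infty(\epsilon,\cG)$ up to constants, and this is finite and vanishes as $\epsilon\to0$ by Assumption~\ref{assumption:longitudinal-equicontinuity-conditions}. The indexed class is $\{D^{\bd}(P_{\bfQ,\bfg}) : (\mathbf m^{\bd}(\bfQ),\bfg)\in\Theta\}$. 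By the EIC formula, $D^{\bd}$ depends on $\bfQ$ only through the sequential regressions $\mathbf m^{\bd}$; the $\psi^{\bd}(\bfQ)$ term is a constant and cancels in $M_n$, which is why the paper works with $\widetilde D^{\bd}$.

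The first key step is uniform boundedness and a Lipschitz bound. Since $Y\in[0,1]$, every $m_k^{\bd}$ lies in $[0,1]$. Assumption~\ref{assumption:longitudinal-average-reference-positivity} gives the lower bound $\zeta$ on the cumulative products $G_k^{\bfg,\bd}$ for both $\bar{\bfg}_n$ and $\bar{\bfg}_\infty$. Hence $|\widetilde D^{\bd}|\le 1+2K/\zeta$.

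The Lipschitz bound splits into two pieces. In $\mathbf m$, with $\bfg$ fixed, each summand $\I\{\bar A_k=\bar d_k\}(m_{k+1}-m_k)/G_k$ changes by at most $2\|\mathbf m-\mathbf m'\|_\infty/\zeta$, so summing over $k$ gives a constant of order $K/\zeta$. In $\bfg$, with $\mathbf m$ fixed, we write
\[
\frac{1}{G_k}-\frac{1}{G_k'}=\frac{G_k'-G_k}{G_kG_k'}.
\]
A telescoping product bound gives $|G_k-G_k'|\le k\|\bfg-\bfg'\|_\infty$, which yields a constant of order $K^2/\zeta^2$. These constants match the stated $c_2=8K/\zeta+8K^2/\zeta^2$ and $c_3=4K(K+1)/\zeta^2$, which are $8LM$ with $M\le1$. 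With them, the $s_n(f_n)$ of the lemma coincides with $s_{2,n}^{\bd}$ and $s_{3,n}^{\bd}$.

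Assumptions~\ref{assumption:longitudinal-equicontinuity-convergence-Q} and~\ref{assumption:longitudinal-equicontinuity-convergence-g} then give $s_n(f_n)=O_P(\delta_n)$ with $\delta_n\to0$, and the lemma yields $M_{2,n}^{\bd},M_{3,n}^{\bd}=o_P(n^{-1/2})$.

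I expect the main obstacle to be the Lipschitz step in $\bfg$. Here $\bar{\bfg}_n$ is a ratio of averaged cumulative probabilities, $\bar g_{k,n}=\bar G_{k,n}/\bar G_{k-1,n}$, and the lemma requires control through sup-norm distances of the index. I would parametrize $\cG$ by the cumulative products $G_k^{\bd}$ rather than by the stage-wise $g_k$, since $D^{\bd}$ enters only through these products and they are bounded below by $\zeta$. This avoids dividing by small $\bar G_{k-1,n}$. The entropy condition would then be read for that parametrization, or transferred via the $k\|\bfg-\bfg'\|_\infty$ telescoping bound.

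A secondary subtlety is that $\bfQ_n^*$ and $\bar{\bfg}_n$ are random and $\mathcal F_n$-dependent rather than predictable. This is exactly what the lemma's uniform-over-$\Theta$ (chaining/Bernstein-type) argument handles: the lemma is applied to the difference class $\{f_\theta-f_{\theta'}\}$, which has an index set of the same entropy. No sample splitting is therefore needed.
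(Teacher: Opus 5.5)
Your proposal is correct and follows essentially the same route as the paper: both write $M_{2,n}^{\bd}$ and $M_{3,n}^{\bd}$ as martingale processes indexed by $(\mathbf m^{\bd},\bar{\bfg})\in\mathcal M^{\bd}\times\cG$, with the paper using $\cG$ alone for $M_{3,n}^{\bd}$. Both then bound the covering numbers by products, obtain sup-norm Lipschitz constants of order $K/\zeta+K^2/\zeta^2$ and $K(K+1)/\zeta^2$ from Assumption~\ref{assumption:longitudinal-average-reference-positivity}, and invoke Lemma~\ref{lemma:martingale-equicontinuity} together with Assumptions~\ref{assumption:longitudinal-equicontinuity-convergence-Q}--\ref{assumption:longitudinal-equicontinuity-convergence-g}. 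Your reparametrization of $\cG$ by cumulative products is not needed, since $D^{\bd}$ depends on $\bar{\bfg}$ only through $G_k^{\bar{\bfg},\bd}$, which your telescoping bound already controls. Likewise, the slight mismatch between your Lipschitz constants and the stated $c_2,c_3$ is harmless: any Lipschitz bound may be enlarged, and since $\rho$ is increasing, the stated conditions on $s_{2,n}^{\bd}$ and $s_{3,n}^{\bd}$ imply the corresponding conditions with larger normalizers.
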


\begin{proof}
For \(M_{2,n}^{\bd}\), define
$\Theta_2^{\bd}
=
\mathcal M^{\bd}\times\cG$,
and, for
\(\theta_2=(\mathbf m^{\bd},\bar{\bfg})\in\Theta_2^{\bd}\), define
$
\|\theta_2\|_\infty
=
\max
\left\{
\|\mathbf m^{\bd}\|_\infty,
\|\bar{\bfg}\|_\infty
\right\}.
$

Let
$
\widetilde D^{\bd}(P_{\bfQ,\bar{\bfg}})
= m_1^{\bd}(\bfQ)+\sum_{k=1}^K D_k^{\bd}(P_{\bfQ,\bar{\bfg}}),
$
and define
$
f_{2,\theta_2}^{\bd}(O)
=
\widetilde D^{\bd}(P_{\bfQ,\bar{\bfg}})(O)
-
\widetilde D^{\bd}(P_{\bfQ_\infty,\bar{\bfg}})(O).
$
Thus,
$
M_{2,n}^{\bd}
=
\frac{1}{n}\sum_{i=1}^n
\left[
f_{2,\theta_{2,n}^{\bd}}^{\bd}(O_i)
-
P_{\bfQ_0,\bfg_i}
f_{2,\theta_{2,n}^{\bd}}^{\bd}
\right]$, 
which is a martingale process indexed by 
$\theta_{2,n}^{\bd}
=
\left(
\mathbf m^{\bd}(\bfQ_n^*),
\bar{\bfg}_n
\right)$.
A direct calculation of the longitudinal EIC under Assumption \ref{assumption:longitudinal-average-reference-positivity} yields
$
\left\|
f_{2,\theta_2}^{\bd}
-
f_{2,\theta_2'}^{\bd}
\right\|_\infty
\leq
L_2
\left\|
\theta_2-\theta_2'
\right\|_\infty$,
where $
L_2
=
\frac{K}{\zeta}
+
\frac{K^2}{\zeta^2}$.

For \(M_{3,n}^{\bd}\), define
$
\Theta_3^{\bd}
=
\cG$ and $
f_{3,\bar{\bfg}}^{\bd}(O)
=
D^{\bd}(P_{\bfQ_\infty,\bar{\bfg}})(O)$.
We have that 
\[
M_{3,n}^{\bd}
=
\frac{1}{n}\sum_{i=1}^n
\left[
\{f_{3,\bar{\bfg}_n}^{\bd}
-
f_{3,\bar{\bfg}_\infty}^{\bd}\}(O_i)
-
P_{\bfQ_0,\bfg_i}
\{f_{3,\bar{\bfg}_n}^{\bd}
-
f_{3,\bar{\bfg}_\infty}^{\bd}\}
\right],
\]
where 
$
\left\|
f_{3,\bar{\bfg}}^{\bd}
-
f_{3,\bar{\bfg}'}^{\bd}
\right\|_\infty
\leq
L_3
\left\|
\bar{\bfg}-\bar{\bfg}'
\right\|_\infty$,
where 
$L_3=\frac{K(K+1)}{2\zeta^2}.
$

Moreover, since 
$
N_\infty(u,\Theta_2^{\bd})
\leq
N_\infty(u,\mathcal M^{\bd})
N_\infty(u,\cG)$, and $
N_\infty(u,\Theta_3^{\bd})
=
N_\infty(u,\cG)
$, Assumption~\ref{assumption:longitudinal-equicontinuity-conditions}
implies the required entropy conditions. Since both outcome regressions and treatment probabilities are bounded by 1, we have 
$c_2=\frac{8K}{\zeta}
+
\frac{8K^2}{\zeta^2}$ and $c_3=\frac{4K(K+1)}{\zeta^2}$.

Finally,
Assumptions~\ref{assumption:longitudinal-equicontinuity-convergence-Q}
and~\ref{assumption:longitudinal-equicontinuity-convergence-g}
provide the required convergence in Lemma \ref{lemma:martingale-equicontinuity}. Applying the
general martingale equicontinuity result separately to the two processes yields
$M_{2,n}^{\bd}
=
o_P(n^{-1/2})$ and $
M_{3,n}^{\bd}
=
o_P(n^{-1/2})$.

\end{proof}

Combining
Theorems~\ref{theorem:longitudinal-tmle-decomposition},
\ref{theorem:longitudinal-asymptotic-normality-M1}, and
\ref{theorem:longitudinal-equicontinuity}
yields the asymptotic normality of ADL-LTMLE \ref{theorem:longitudinal-asymptotic-normality}:
$
\sqrt n
\left\{
\psi^{\bd}(\bfQ_n^*)
-
\psi^{\bd}(\bfQ_0)
\right\}
\dto
N(0,\sigma_{\bd}^2)$.

\subsection{Semiparametric Efficiency}
\label{subsec:longitudinal-efficiency}

The semiparametric efficiency analysis of the ADL-LTMLE builds directly on the ADL-TMLE framework of \citet{zhang2025efficient}, which establishes efficiency through local asymptotic normality and the convolution theorem \cite{van1996weak}. 

Write
$
q(o)
=
q_1(l_1)
\prod_{k=1}^K
q_{k+1}(l_{k+1}\mid a_k,h_k)$
and
$
g_i(o)
=
\prod_{k=1}^K
g_{k,i}(a_k\mid h_k)$.
Then the longitudinal adaptive likelihood has the general factorization
$
p_{\bfQ,\bfg_{1:n}}^n(o_1,\ldots,o_n)
=
\prod_{i=1}^n q(o_i)g_i(o_i).
$
Define a submodel
\(\{\bfQ_\epsilon^h:\epsilon\}\) through \(\bfQ_0\) by
\[
dQ_{1,\epsilon}^h(l_1)
=
\{1+\epsilon h_1(l_1)\}
dQ_{0,1}(l_1)
\]
and
\[
dQ_{k+1,\epsilon}^h
(l_{k+1}\mid a_k,h_k)
=
\left\{
1+
\epsilon
h_{k+1}(l_{k+1},a_k,h_k)
\right\}
dQ_{0,k+1}
(l_{k+1}\mid a_k,h_k)
\]
for \(k\in[K]\).

Let
$
P_{\bfQ_\epsilon^h,\bfg_{1:n}}^n$ denote the joint law of
\((O_1,\ldots,O_n)\) under 
\(\bfQ_\epsilon^h\) and \(\bfg_{1:n}\). Its density is factorized by 
\[
\begin{aligned}
&
p_{\bfQ_\epsilon^h,\bfg_{1:n}}^n(o_1,\ldots,o_n)
\\
&\qquad=
\prod_{i=1}^n
\left[
q_{1,\epsilon}^h(l_{1,i})
\prod_{k=1}^K
g_{k,i}
(a_{k,i}\mid h_{k,i},\mathcal F_{i-1})
q_{k+1,\epsilon}^h
(l_{k+1,i}\mid a_{k,i},h_{k,i})
\right].
\end{aligned}
\]

\begin{proposition}[Local Asymptotic Normality]
\label{proposition:longitudinal-LAN}

Under
Assumption~\ref{assumption:longitudinal-hilbert-norm},
for every bounded score direction \(h\) and fixed \(t\in\mathbb R\),
\[
\log
\frac{
dP_{\bfQ_{t/\sqrt n}^{h},\bfg_{1:n}}^n
}{
dP_{\bfQ_0,\bfg_{1:n}}^n
}
=
t\Delta_{n,h}
-
\frac{t^2}{2}
P_{\bfQ_0,\bar{\bfg}_\infty}h^2
+
o_P(1),
\]
where
$
\Delta_{n,h}
=
\frac1{\sqrt n}\sum_{i=1}^n h(O_i)
\dto
N
\left(
0,
P_{\bfQ_0,\bar{\bfg}_\infty}h^2
\right)$.
\end{proposition}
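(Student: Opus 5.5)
Since the design factors $g_{k,i}$ do not depend on $\epsilon$, they cancel in the likelihood ratio, which reduces to a sum over participants and stages of $\log\{1+(t/\sqrt n)h_{k}\}$ evaluated at each participant's data. I will therefore first write
\[
\log\frac{dP^n_{\bfQ^h_{t/\sqrt n},\bfg_{1:n}}}{dP^n_{\bfQ_0,\bfg_{1:n}}}
=\sum_{i=1}^n\log\Bigl\{1+\tfrac{t}{\sqrt n}h_1(L_{1,i})\Bigr\}+\sum_{i=1}^n\sum_{k=1}^K\log\Bigl\{1+\tfrac{t}{\sqrt n}h_{k+1}(L_{k+1,i},A_{k,i},H_{k,i})\Bigr\}.
\]
Because $h$ is bounded, $|t h_k/\sqrt n|<1/2$ for $n$ large. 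A second-order Taylor expansion $\log(1+x)=x-x^2/2+O(|x|^3)$ then gives
\[
t\Delta_{n,h}-\frac{t^2}{2}\frac1n\sum_{i=1}^n\sum_{k}h_k^2(O_i)+O(n^{-1/2}),
\]
since the cubic remainder is at most $n\cdot C n^{-3/2}$.

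For the quadratic term, note that the cross terms $h_jh_k$ with $j\neq k$ are martingale-orthogonal. So $\sum_k h_k^2$ and $h^2=(\sum_k h_k)^2$ differ by the sum of cross products. I will show that $\frac1n\sum_i h(O_i)^2-\frac1n\sum_i P_{\bfQ_0,\bfg_i}h^2\to0$ in probability, using a martingale weak law: the summands are bounded, so the variance of the martingale average is $O(1/n)$. The same argument applied to each cross product $h_jh_k$, whose conditional mean given $\mathcal F_{i-1}$ is zero because $E\{h_{k}\mid A_{k-1},H_{k-1}\}=0$ for the later index, shows that the cross terms vanish. Hence $\frac1n\sum_i\sum_k h_k^2(O_i)=\frac1n\sum_iP_{\bfQ_0,\bfg_i}h^2+o_P(1)$.

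Linearity in the law gives $\frac1n\sum_i P_{\bfQ_0,\bfg_i}h^2=P_{\bfQ_0,\bar{\bfg}_n}h^2$, because $\bar P_{\bfQ_0,\bfg_{1:n}}=P_{\bfQ_0,\bar{\bfg}_n}$ by Section~\ref{supp_sec:bar-g-derivation}. Assumption~\ref{assumption:longitudinal-hilbert-norm} then turns this into $P_{\bfQ_0,\bar{\bfg}_\infty}h^2+o_P(1)$, which yields the LAN expansion.

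For the limit law of $\Delta_{n,h}$:
\begin{itemize}
\item Each $h(O_i)$ is a martingale difference with respect to $\mathcal F_{i-1}$. Under $\bfg_i$, which is $\mathcal F_{i-1}$-measurable, each $h_k$ has conditional mean zero by iterated conditioning, since $h_k$ is conditionally centered given the past within the participant.
\item Boundedness gives the conditional Lindeberg condition.
\item The conditional variance sum is $\frac1n\sum_iP_{\bfQ_0,\bfg_i}h^2\pto P_{\bfQ_0,\bar{\bfg}_\infty}h^2$ by the step above.
\end{itemize}
The martingale CLT, as used in Theorem~\ref{theorem:longitudinal-asymptotic-normality-M1}, then gives the stated normal limit.

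The main obstacle I anticipate is the variance-stabilization step. It requires carefully identifying the average of the conditional second moments with $P_{\bfQ_0,\bar{\bfg}_n}h^2$ while the designs $\bfg_i$ are random, and then invoking Assumption~\ref{assumption:longitudinal-hilbert-norm}. The remaining steps are routine given the boundedness of $h$.
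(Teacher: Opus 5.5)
Your proposal is correct and follows essentially the same route as the paper: cancel the design factors, Taylor-expand $\log(1+x)$ to second order, and control the quadratic term by a bounded-martingale weak law plus stage-wise conditional orthogonality; then identify $\frac1n\sum_i P_{\bfQ_0,\bfg_i}h^2$ with $P_{\bfQ_0,\bar{\bfg}_n}h^2$, invoke Assumption~\ref{assumption:longitudinal-hilbert-norm}, and apply the martingale CLT with the conditional Lindeberg condition supplied by boundedness. The only cosmetic difference is that the paper applies the weak law directly to $\sum_k h_{k,i}^2$ and uses $P_{\bfQ_0,\bfg_i}(h_kh_s)=0$ at the conditional-expectation level, whereas you apply it to $h(O_i)^2$ and to each cross product separately, which amounts to the same argument.
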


\begin{proof}[Proof of
Proposition~\ref{proposition:longitudinal-LAN}]

For participant \(i\), write
$
h_{1,i}=h_1(L_{1,i})$
and, for \(k\in[K]\),
$
h_{k+1,i}
=
h_{k+1}(L_{k+1,i},A_{k,i},H_{k,i})$. Therefore, $h(O_i)=\sum_{k=1}^{K+1}h_{k,i}$.

The log-likelihood ratio can be as 
\[
\log
\frac{
dP_{\bfQ_{t/\sqrt n}^{h},\bfg_{1:n}}^n
}{
dP_{\bfQ_0,\bfg_{1:n}}^n
}
=
\sum_{i=1}^n
\sum_{k=1}^{K+1}
\log
\left(
1+\frac{t}{\sqrt n}h_{k,i}
\right).
\]

Applying Taylor expansion, we have
\[
\begin{aligned}
\log
\frac{
dP_{\bfQ_{t/\sqrt n}^{h},\bfg_{1:n}}^n
}{
dP_{\bfQ_0,\bfg_{1:n}}^n
}
=
\frac{t}{\sqrt n}
\sum_{i=1}^n
\sum_{k=1}^{K+1}h_{k,i}
-
\frac{t^2}{2n}
\sum_{i=1}^n
\sum_{k=1}^{K+1}h_{k,i}^2
+
o(1).
\end{aligned}
\]
Define $P_{\bfQ_0,\bfg_i} f = E\{f(O_i)\mid\mathcal F_{i-1}\}$.
Since the longitudinal score components are conditionally orthogonal, we have $P_{\bfQ_0,\bfg_i} h=0$ and 
$
P_{\bfQ_0,\bfg_i}(h_k h_s)
=
0$ for $k \neq s$.
Consequently,
$
P_{\bfQ_0,\bfg_i} h^2
=
\sum_{k=1}^{K+1}P_{\bfQ_0,\bfg_i} h_k^2$.

Define
$
Z_{i,h}
=
\sum_{k=1}^{K+1}h_{k,i}^2
-
P_{\bfQ_0,\bfg_i}
\left(
\sum_{k=1}^{K+1}h_k^2
\right)$.
The variables \(Z_{i,h}\) form a bounded martingale-difference sequence.
Therefore,
$
\frac1n\sum_{i=1}^n Z_{i,h}
=
o_P(1)$.
Under Assumption~\ref{assumption:longitudinal-hilbert-norm}, 
$
\frac1n
\sum_{i=1}^n
\sum_{k=1}^{K+1}h_{k,i}^2
=
P_{\bfQ_0,\bar{\bfg}_n}h^2
+
o_P(1)
\pto
P_{\bfQ_0,\bar{\bfg}_\infty}h^2
$.
Since the conditional Lindeberg condition holds with $h$ bounded,
the martingale CLT gives
$
\Delta_{n,h}
\dto
N
\left(
0,
P_{\bfQ_0,\bar{\bfg}_\infty}h^2
\right)$, which proves the
proposition.
\end{proof}

For the submodel
\(\{\bfQ_\epsilon^h:\epsilon\}\)
defined above, let
$
\dot\psi_{\bfQ_0}^{\bd}(h)
:=
\left.
\frac{d}{d\epsilon}
\psi^{\bd}(\bfQ_\epsilon^h)
\right|_{\epsilon=0}$
denote the pathwise derivative of
\(\psi^{\bd}\) at \(\bfQ_0\) in the score direction \(h\).
Under the fixed longitudinal reference law
\(P_{\bfQ_0,\bar{\bfg}_\infty}\), the pathwise derivative satisfies
$
\dot\psi_{\bfQ_0}^{\bd}(h)
=
P_{\bfQ_0,\bar{\bfg}_\infty}
\left(
D^{\bd} (P_{\bfQ_0,\bar{\bfg}_\infty})h
\right)$, for any $h\in\mathcal H(\bfQ_0,\bar{\bfg}_\infty)$.
Thus, \(D^{\bd} (P_{\bfQ_0,\bar{\bfg}_\infty})\) is the canonical gradient in the limiting reference experiment $P_{\bfQ_0,\bar{\bfg}_\infty}$. The corresponding semiparametric
efficiency bound is therefore 
$P_{\bfQ_0,\bar{\bfg}_\infty}
\left[
\left(
D^{\bd} (P_{\bfQ_0,\bar{\bfg}_\infty})
\right)^2
\right]$.

When \(\bfQ_\infty=\bfQ_0\), the
difference $\psi^{\bd}(\bfQ_n^*)
-
\psi^{\bd}(\bfQ_0)$
has \(\frac{1}{n}\sum_{i=1}^n
\left[
D^{\bd}(P_{\bfQ_0,\bar{\bfg}_\infty})(O_i)
-
P_{\bfQ_0,\bfg_i}
D^{\bd}(P_{\bfQ_0,\bar{\bfg}_\infty})
\right]\) as the leading term. Since $P_{\bfQ_0,\bfg_i}
D^{\bd}(P_{\bfQ_0,\bar{\bfg}_\infty})
=
0$ and Assumption \ref{assumption:longitudinal-hilbert-norm} hold, we have 
\begin{align*}
\frac{1}{n}\sum_{i=1}^n
\Var_{P_{\bfQ_0,\bfg_i}}
D^{\bd} (P_{\bfQ_0,\bar{\bfg}_\infty})
=
P_{\bfQ_0,\bar{\bfg}_n}
\left[
\left(
D^{\bd} (P_{\bfQ_0,\bar{\bfg}_\infty})
\right)^2
\right]
\pto
P_{\bfQ_0,\bar{\bfg}_\infty}
\left[
\left(
D^{\bd} (P_{\bfQ_0,\bar{\bfg}_\infty})
\right)^2
\right].
\end{align*}
Thus, the limiting variance of ADL-LTMLE equals the squared norm of the canonical gradient under the limiting reference law. 

The regularity arguments developed for the
ADL-TMLE in \citet{zhang2025efficient} directly applies to the present
longitudinal experiment using 
the corresponding longitudinal stage-wise score.
By the
convolution theorem \citep{van1996weak}, the asymptotic distribution
of any regular estimator is the convolution of a normal distribution 
$N\left(
0,
P_{\bfQ_0,\bar{\bfg}_\infty}
\left[
\left\{
D^{\bd}
(P_{\bfQ_0,\bar{\bfg}_\infty})
\right\}^2
\right]
\right)
$
with an independent noise distribution. Since the ADL-LTMLE is regular and
attains this Gaussian limit without additional noise, it is semiparametric
efficient.

\section{Data-generating mechanisms}
\label{supp_sec:dgd}
\paragraph{Two-stage setting ($K=2$).}
For the two-stage setting, the observed data are
$
O = (L_1, A_1, L_2, A_2, Y)$, 
where $L_1,L_2 \in \{0,1\}$ and $Y$ is continuous.
The baseline covariate is generated as $L_1 \sim \mathrm{Bernoulli}(0.5)$.
Conditional on $(L_1,A_1)$, the intermediate covariate $L_2$ satisfies
$
P(L_2 = 1 \mid L_1=0, A_1=0)=0.2$, $
P(L_2 = 1 \mid L_1=0, A_1=1)=0.8$,
$
P(L_2 = 1 \mid L_1=1, A_1=0)=0.7$, and $
P(L_2 = 1 \mid L_1=1, A_1=1)=0.3.
$

The outcome $Y$ is generated as
$
Y \mid (L_1,A_1,L_2,A_2) \sim
\mathcal{N}\!\left(\mu(L_1,A_1,L_2,A_2),\;\sigma^2(L_1,A_1,A_2)\right)$,
with
$
\mu(L_1,A_1,L_2,A_2)
=
8 + 2L_1 + 1.5L_2
+ (4 + 2(1-L_1))A_1
+ (2 + 3L_1)A_2 
- 1.5 A_1A_2
+ 1.5 L_2A_2 .
$
The standard deviation $\sigma(L_1,A_1,A_2)$ is set to $0.5$ when $(L_1,A_1,A_2)=(0,0,0)$, to $4$ when $(L_1,A_1,A_2)=(0,0,1)$
or $(1,1,1)$, to $1$ when
$(L_1,A_1,A_2)=(1,0,0)$, and to $1$ otherwise.

\paragraph{Three-stage setting ($K=3$).}
For the three-stage setting, the observed data are
$
O = (L_1, A_1, L_2, A_2, L_3, A_3, Y)$,
where $L_1,L_2,L_3 \in \{0,1\}$ and $Y$ is continuous.
The baseline covariate is generated as $L_1 \sim \mathrm{Bernoulli}(0.5)$.
Conditional on $(L_1,A_1)$,
$
P(L_2 = 1 \mid L_1,A_1)
=
\min\!\left\{0.9,\max\!\left(0.1,\; 0.3 + 0.1L_1 + 0.2A_1\right)\right\}.
$
Given $(L_1,A_1,L_2,A_2)$, 
$
P(L_3 = 1 \mid L_1,A_1,L_2,A_2)
=
\min\left\{0.9,\max\left(0.1,\; 0.3 + 0.1L_2 + 0.1A_2 + 0.05A_1\right)\right\}.
$

The outcome regression of $Y$ is generated as
$
Y \mid (L_1,A_1,L_2,A_2,L_3,A_3)
\sim \mathcal{N}(\mu,\sigma^2)$,
with
$
\mu
=
10 + 2L_1 + 2L_2 + 2L_3
+ 3A_1 + 2A_2 + A_3
- A_1A_2 - 0.5 A_2A_3$,
and the standard deviation $\sigma$ equal to $4$ when $(A_1,A_2,A_3)=(0,0,1)$,
equal to $2$ when $(A_1,A_2,A_3)=(0,1,1)$, and equal to $1$ otherwise.

\section{Additional Simulation Results}
We also implemented one-step estimators under the simulation scenarios in Section~\ref{sec:simulation}, as analogues of the ADL-LTMLE
and AD-LTMLE estimators. Specifically, they are defined by:
$
\hat\psi_{n,\mathrm{one\text{-}step}}^\tau
=
\psi_{n,\mathrm{init}}^\tau(\bfQ_n)
+
\frac{1}{n}\sum_{i=1}^n
D^\tau\!\left(P_{\bfQ_n,\bar{\bfg}_n}\right)(O_i)$ and $
\hat\psi_{n,\mathrm{one\text{-}step}}^{\tau,\dagger}
=
\psi_{n,\mathrm{init}}^\tau(\bfQ_n)
+
\frac{1}{n}\sum_{i=1}^n
D^\tau\!\left(P_{\bfQ_n,\bfg_i}\right)(O_i)$, respectively, where $\bfQ_n$ denotes the initial estimate of $\bfQ$. 
The two estimators coincide under a fixed design. The results are reported in Tables~\ref{tab:K2_bar_g_onestep} and Tables~\ref{tab:K3_bar_g_onestep} for $K=2$ and $K=3$, respectively, which show performance similar to that of the corresponding LTMLE-based estimators.

\input{table_K2_Onestep}
\input{table_K3_Onestep}

Next, we conduct sensitivity analyses for the ADL-LTMLE approach. We vary
the total and cohort-wise sample size, and report results using both the
correctly specified strata-specific mean estimator and misspecified working
outcome-regression models.
Results across Tables \ref{tab:k2_ltmle_sample_size_sensitivity}-\ref{tab:k3_ltmle_sample_size_misspecification_sensitivity} show that TarVar decreases substantially as the total sample size
increases for both $K=2$ and $K=3$.
While misspecified outcome regressions increase variance, coverage remains near nominal level across different sample sizes, supporting the robustness of ADL-LTMLE to model misspecification.

\input{table_K2_ltmle_sample_size_sensitivity}

\input{table_K3_ltmle_sample_size_sensitivity}

\lhead[\footnotesize\thepage\fancyplain{}\leftmark]{}\rhead[]{\fancyplain{}\rightmark\footnotesize\thepage}

\bibliography{main}

\end{document}

%% file: notations.tex
\newcommand{\bd}{\mathbf{d}}
\newcommand{\bO}{\bar{O}}

\newcommand{\cA}{\mathcal{A}}

\newcommand{\cH}{\mathcal{H}}

\newcommand{\indep}{\perp\!\!\!\!\perp} 

\newcommand{\bfQ}{\mathbf{Q}} 
\newcommand{\bfg}{\mathbf{g}} 

\newcommand{\I}{I} 
\newcommand{\Var}{Var}

\newcommand{\cQ}{\mathcal{Q}}
\newcommand{\cG}{\mathcal{G}}

\def\pto{\overset{p}{\to}}
\def\dto{\overset{d}{\to}}

%% file: table_K2_LTMLE.tex
\begin{table}[tbh!]
\caption{\small Simulation results for different longitudinal designs and estimators for $K=2$.
We report the empirical variance of the estimates of $\psi^{1}$ (Var$_1$) and $\psi^{2}$ (Var$_2$), the weighted sum of variances using user-specified weight $w$ (TarVar), the empirical $95\%$ coverage probabilities (Cov$_1$, Cov$_2$), and the corresponding oracle coverage probabilities (OCov$_1$, OCov$_2$).}
\label{tab:K2_bar_g_ltmle}\par
\centering
\resizebox{0.95\linewidth}{!}{
\begin{tabular}{|lll rrr rrrr|}
\hline
$w$ & Design & Estimator
& TarVar & Var$_1$ & Var$_2$
& Cov$_1$(\%) & Cov$_2$(\%) & OCov$_1$(\%) & OCov$_2$(\%) \\
\hline
\multirow{4}{*}{$(1/2,\,1/2)$}
& Non-adaptive
& LTMLE
& 0.029 & 0.040 & 0.018 & 95.0 & 97.0 & 95.4 & 95.6\\

& Adaptive
& AD-LTMLE
& 0.022 & 0.030 & 0.013 & 92.6 & 94.4 & 94.8 & 94.2\\

& Adaptive
& ADL-LTMLE
& 0.019 & 0.026 & 0.012 & 93.0 & 94.6 & 94.2 & 95.4\\

& Oracle
& LTMLE
& 0.017 & 0.024 & 0.011 & 94.2 & 94.6 & 95.0 & 95.0\\
\hline
\multirow{4}{*}{$(1/3,\,2/3)$}
& Non-adaptive
& LTMLE
& 0.026 & 0.040 & 0.018 & 95.0 & 97.0 & 95.4 & 95.6\\

& Adaptive
& AD-LTMLE
& 0.019 & 0.032 & 0.013 & 92.0 & 96.0 & 94.2 & 96.6\\

& Adaptive
& ADL-LTMLE
& 0.017 & 0.028 & 0.012 & 92.6 & 95.6 & 95.0 & 96.0\\

& Oracle
& LTMLE
& 0.015 & 0.024 & 0.010 & 94.8 & 95.6 & 95.4 & 95.6\\
\hline
\multirow{4}{*}{$(2/3,\,1/3)$}
& Non-adaptive
& LTMLE
& 0.033 & 0.040 & 0.018 & 95.0 & 97.0 & 95.4 & 95.6\\

& Adaptive
& AD-LTMLE
& 0.025 & 0.030 & 0.014 & 92.2 & 94.4 & 95.4 & 95.2\\

& Adaptive
& ADL-LTMLE
& 0.021 & 0.026 & 0.012 & 93.2 & 94.8 & 95.4 & 95.6\\

& Oracle
& LTMLE
& 0.019 & 0.023 & 0.012 & 94.6 & 94.6 & 95.2 & 94.8\\
\hline
\end{tabular}
}
\end{table}

%% file: table_K3_LTMLE.tex
\begin{table}[tbh!]
\caption{\small Simulation results for different designs amd estimators for $K=3$.
We report the empirical variance of the estimates of
$\psi^{1}$ (Var$_1$), $\psi^{2}$ (Var$_2$), and $\psi^{3}$ (Var$_3$),
the weighted sum of variances using user-specified weight $w$ (TarVar), the empirical $95\%$ coverage probabilities (Cov$_1$, Cov$_2$, Cov$_3$),
and the corresponding oracle coverage probabilities
(OCov$_1$, OCov$_2$, OCov$_3$).}
\label{tab:K3_bar_g_ltmle}\par
\centering
\resizebox{0.95\linewidth}{!}{
\begin{tabular}{|lll rrrr rrr rrr|}
\hline
$w$ & Design & Estimator
& TarVar & Var$_1$ & Var$_2$ & Var$_3$
& Cov$_1$(\%) & Cov$_2$(\%) & Cov$_3$(\%)
& OCov$_1$(\%) & OCov$_2$(\%) & OCov$_3$(\%) \\
\hline
\multirow{4}{*}{$(1/3,\,1/3,\,1/3)$}
& Non-adaptive
& LTMLE
& 0.056 & 0.026 & 0.079 & 0.063 & 94.2 & 94.6 & 96.2 & 94.0 & 94.2 & 95.4 \\

& Adaptive
& AD-LTMLE
& 0.031 & 0.022 & 0.039 & 0.033 & 95.2 & 94.6 & 94.6 & 95.4 & 94.8 & 94.0 \\

& Adaptive
& ADL-LTMLE
& 0.026 & 0.020 & 0.031 & 0.026 & 95.2 & 94.4 & 95.2 & 95.2 & 95.0 & 95.4 \\

& Oracle
& LTMLE
& 0.022 & 0.020 & 0.025 & 0.020 & 95.0 & 94.8 & 96.4 & 94.8 & 94.0 & 94.6 \\
\hline
\multirow{4}{*}{$(1/6,\,1/3,\,1/2)$}
& Non-adaptive
& LTMLE
& 0.062 & 0.026 & 0.079 & 0.063 & 94.2 & 94.6 & 96.2 & 94.0 & 94.2 & 95.4 \\

& Adaptive
& AD-LTMLE
& 0.032 & 0.027 & 0.038 & 0.030 & 94.4 & 95.4 & 94.2 & 94.4 & 95.4 & 94.8 \\

& Adaptive
& ADL-LTMLE
& 0.024 & 0.024 & 0.028 & 0.022 & 94.4 & 95.6 & 95.2 & 94.8 & 95.0 & 94.8 \\

& Oracle
& LTMLE
& 0.021 & 0.026 & 0.024 & 0.017 & 95.8 & 95.8 & 97.2 & 95.4 & 94.4 & 96.2 \\
\hline
\multirow{4}{*}{$(1/2,\,1/3,\,1/6)$}
& Non-adaptive
& LTMLE
& 0.050 & 0.026 & 0.079 & 0.063 & 94.2 & 94.6 & 96.2 & 94.0 & 94.2 & 95.4 \\

& Adaptive
& AD-LTMLE
& 0.029 & 0.019 & 0.039 & 0.038 & 95.0 & 93.4 & 93.0 & 95.0 & 93.6 & 94.4 \\

& Adaptive
& ADL-LTMLE
& 0.024 & 0.017 & 0.031 & 0.029 & 94.8 & 94.2 & 94.6 & 95.4 & 94.4 & 95.4 \\

& Oracle
& LTMLE
& 0.021 & 0.016 & 0.026 & 0.025 & 96.2 & 95.6 & 95.2 & 95.0 & 95.0 & 94.4 \\
\hline
\end{tabular}
}
\end{table}

%% file: table_K2_Onestep.tex
\begin{table}[tbh!]
\caption{\small Simulation results for $K=2$ using one-step estimators.
Performance of one-step estimators under different longitudinal designs.
We report the empirical variance of the estimates of $\psi^{1}$ (Var$_1$) and $\psi^{2}$ (Var$_2$), the weighted sum of variances using normalized user-specified weight $w$ (TarVar), the empirical $95\%$ coverage probabilities (Cov$_1$, Cov$_2$), and the corresponding oracle coverage probabilities (OCov$_1$, OCov$_2$) across 500 Monte Carlo runs.}
\label{tab:K2_bar_g_onestep}\par
\centering
\resizebox{0.95\linewidth}{!}{
\begin{tabular}{|lll rrr rrrr|}
\hline
$w$ & Design & Estimator
& TarVar & Var$_1$ & Var$_2$
& Cov$_1$(\%) & Cov$_2$(\%) & OCov$_1$(\%) & OCov$_2$(\%) \\
\hline
\multirow{4}{*}{$(1/2,\,1/2)$}
& Non-adaptive
& $\hat\psi_{n,\mathrm{one-step}}^{\tau}$
& 0.029 & 0.040 & 0.018 & 95.0 & 97.0 & 95.4 & 95.6\\

& Adaptive
& $\hat\psi_{n,\mathrm{one-step}}^{\tau,\dagger}$
& 0.022 & 0.030 & 0.013 & 92.8 & 94.4 & 94.8 & 94.4\\

& Adaptive
& $\hat\psi_{n,\mathrm{one-step}}^{\tau}$
& 0.019 & 0.026 & 0.012 & 93.0 & 94.6 & 94.2 & 95.4\\

& Oracle
& $\hat\psi_{n,\mathrm{one-step}}^{\tau}$
& 0.017 & 0.024 & 0.011 & 94.2 & 94.6 & 95.0 & 95.0\\
\hline
\multirow{4}{*}{$(1/3,\,2/3)$}
& Non-adaptive
& $\hat\psi_{n,\mathrm{one-step}}^{\tau}$
& 0.026 & 0.040 & 0.018 & 95.0 & 97.0 & 95.4 & 95.6\\

& Adaptive
& $\hat\psi_{n,\mathrm{one-step}}^{\tau,\dagger}$
& 0.019 & 0.032 & 0.013 & 91.6 & 95.8 & 94.4 & 96.6\\

& Adaptive
& $\hat\psi_{n,\mathrm{one-step}}^{\tau}$
& 0.017 & 0.028 & 0.012 & 92.6 & 95.6 & 95.0 & 96.0\\

& Oracle
& $\hat\psi_{n,\mathrm{one-step}}^{\tau}$
& 0.015 & 0.024 & 0.010 & 94.8 & 95.6 & 95.4 & 95.6\\
\hline
\multirow{4}{*}{$(2/3,\,1/3)$}
& Non-adaptive
& $\hat\psi_{n,\mathrm{one-step}}^{\tau}$
& 0.033 & 0.040 & 0.018 & 95.0 & 97.0 & 95.4 & 95.6\\

& Adaptive
& $\hat\psi_{n,\mathrm{one-step}}^{\tau,\dagger}$
& 0.025 & 0.030 & 0.014 & 92.2 & 94.4 & 95.4 & 95.2\\

& Adaptive
& $\hat\psi_{n,\mathrm{one-step}}^{\tau}$
& 0.021 & 0.026 & 0.012 & 93.2 & 94.8 & 95.4 & 95.6\\

& Oracle
& $\hat\psi_{n,\mathrm{one-step}}^{\tau}$
& 0.019 & 0.023 & 0.012 & 94.6 & 94.6 & 95.2 & 94.8\\
\hline
\end{tabular}
}
\end{table}

%% file: table_K3_Onestep.tex
\begin{table}[tbh!]
\caption{\small Simulation results for $K=3$ (one-step estimator).
Performance of one-step estimators under different longitudinal designs.
We report the weighted sum of variances using normalized user-specified weight $w$ (TarVar),
the empirical variance of the estimates of
$\psi^{1}$ (Var$_1$), $\psi^{2}$ (Var$_2$), and $\psi^{3}$ (Var$_3$),
the empirical $95\%$ coverage probabilities (Cov$_1$, Cov$_2$, Cov$_3$),
and the corresponding oracle coverage probabilities
(OCov$_1$, OCov$_2$, OCov$_3$) across 500 Monte Carlo runs.}
\label{tab:K3_bar_g_onestep}\par
\centering
\resizebox{0.95\linewidth}{!}{
\begin{tabular}{|lll rrrr rrr rrr|}
\hline
$w$ & Design & Estimator
& TarVar & Var$_1$ & Var$_2$ & Var$_3$
& Cov$_1$(\%) & Cov$_2$(\%) & Cov$_3$(\%)
& OCov$_1$(\%) & OCov$_2$(\%) & OCov$_3$(\%) \\
\hline
\multirow{4}{*}{$(1/3,\,1/3,\,1/3)$}
& Non-adaptive
& $\hat\psi_{n,\mathrm{one\text{-}step}}^{\tau}$
& 0.056 & 0.026 & 0.079 & 0.063 & 94.2 & 94.6 & 96.2 & 94.0 & 94.2 & 95.4 \\

& Adaptive
& $\hat\psi_{n,\mathrm{one\text{-}step}}^{\tau,\dagger}$
& 0.031 & 0.022 & 0.039 & 0.033 & 95.2 & 94.6 & 94.6 & 95.4 & 94.6 & 93.6 \\

& Adaptive
& $\hat\psi_{n,\mathrm{one\text{-}step}}^{\tau}$
& 0.026 & 0.020 & 0.031 & 0.026 & 95.2 & 94.4 & 95.2 & 95.2 & 95.0 & 95.4 \\

& Oracle
& $\hat\psi_{n,\mathrm{one\text{-}step}}^{\tau}$
& 0.022 & 0.020 & 0.025 & 0.020 & 95.0 & 94.8 & 96.4 & 94.8 & 94.0 & 94.6 \\
\hline
\multirow{4}{*}{$(1/6,\,1/3,\,1/2)$}
& Non-adaptive
& $\hat\psi_{n,\mathrm{one\text{-}step}}^{\tau}$
& 0.062 & 0.026 & 0.079 & 0.063 & 94.2 & 94.6 & 96.2 & 94.0 & 94.2 & 95.4 \\

& Adaptive
& $\hat\psi_{n,\mathrm{one\text{-}step}}^{\tau,\dagger}$
& 0.032 & 0.027 & 0.038 & 0.029 & 94.6 & 95.2 & 94.4 & 94.6 & 95.0 & 94.6 \\

& Adaptive
& $\hat\psi_{n,\mathrm{one\text{-}step}}^{\tau}$
& 0.024 & 0.024 & 0.028 & 0.022 & 94.4 & 95.6 & 95.2 & 94.8 & 95.0 & 94.8 \\

& Oracle
& $\hat\psi_{n,\mathrm{one\text{-}step}}^{\tau}$
& 0.021 & 0.026 & 0.024 & 0.017 & 95.8 & 95.8 & 97.2 & 95.4 & 94.4 & 96.2 \\
\hline
\multirow{4}{*}{$(1/2,\,1/3,\,1/6)$}
& Non-adaptive
& $\hat\psi_{n,\mathrm{one\text{-}step}}^{\tau}$
& 0.050 & 0.026 & 0.079 & 0.063 & 94.2 & 94.6 & 96.2 & 94.0 & 94.2 & 95.4 \\

& Adaptive
& $\hat\psi_{n,\mathrm{one\text{-}step}}^{\tau,\dagger}$
& 0.029 & 0.019 & 0.039 & 0.038 & 95.0 & 93.4 & 93.2 & 95.0 & 93.2 & 93.8 \\

& Adaptive
& $\hat\psi_{n,\mathrm{one\text{-}step}}^{\tau}$
& 0.024 & 0.017 & 0.031 & 0.029 & 94.8 & 94.2 & 94.6 & 95.4 & 94.4 & 95.4 \\

& Oracle
& $\hat\psi_{n,\mathrm{one\text{-}step}}^{\tau}$
& 0.021 & 0.016 & 0.026 & 0.025 & 96.2 & 95.6 & 95.2 & 95.0 & 95.0 & 94.4 \\
\hline
\end{tabular}
}
\end{table}

%% file: table_K2_ltmle_sample_size_sensitivity.tex
\begin{table}[tbh!]
\caption{\small Complete results of ADL-LTMLE using the correct strata-specific mean estimates ($K=2$).
The settings vary total sample size $n$ and cohort size $b$.}
\label{tab:k2_ltmle_sample_size_sensitivity}\par
\centering
\resizebox{0.98\linewidth}{!}{
\begin{tabular}{|llrrrrrrr|}
\hline
$w$ & Setting & TarVar & Var$_1$ & Var$_2$ & Cov$_1$(\%) & Cov$_2$(\%) & OCov$_1$(\%) & OCov$_2$(\%) \\
\hline
$(1/2,\,1/2)$ & $n=1000,\ b=200$ & 0.036 & 0.049 & 0.024 & 94.6 & 95.4 & 95.4 & 96.0 \\
$(1/2,\,1/2)$ & $n=2000,\ b=200$ & 0.017 & 0.022 & 0.011 & 94.8 & 95.8 & 94.4 & 95.8 \\
$(1/2,\,1/2)$ & $n=2000,\ b=400$ & 0.019 & 0.026 & 0.012 & 93.0 & 94.6 & 94.2 & 95.4 \\
$(1/2,\,1/2)$ & $n=4000,\ b=800$ & 0.008 & 0.011 & 0.006 & 95.8 & 96.4 & 95.0 & 96.0 \\
\hline
$(1/3,\,2/3)$ & $n=1000,\ b=200$ & 0.033 & 0.051 & 0.023 & 93.6 & 94.8 & 94.2 & 95.2 \\
$(1/3,\,2/3)$ & $n=2000,\ b=200$ & 0.015 & 0.023 & 0.011 & 95.2 & 95.0 & 94.2 & 95.0 \\
$(1/3,\,2/3)$ & $n=2000,\ b=400$ & 0.017 & 0.028 & 0.012 & 92.6 & 95.6 & 95.0 & 96.0 \\
$(1/3,\,2/3)$ & $n=4000,\ b=800$ & 0.007 & 0.012 & 0.005 & 96.6 & 94.2 & 96.0 & 94.0 \\
\hline
$(2/3,\,1/3)$ & $n=1000,\ b=200$ & 0.039 & 0.046 & 0.025 & 94.6 & 94.8 & 95.4 & 95.2 \\
$(2/3,\,1/3)$ & $n=2000,\ b=200$ & 0.018 & 0.021 & 0.012 & 95.0 & 95.6 & 94.6 & 95.8 \\
$(2/3,\,1/3)$ & $n=2000,\ b=400$ & 0.021 & 0.026 & 0.012 & 93.2 & 94.8 & 95.4 & 95.6 \\
$(2/3,\,1/3)$ & $n=4000,\ b=800$ & 0.009 & 0.011 & 0.006 & 96.6 & 97.2 & 95.6 & 97.4 \\
\hline
\end{tabular}
}
\end{table}

\begin{table}[tbh!]
\caption{\small Complete results of ADL-LTMLE using the misspecified working outcome regression $Y \sim L_1 + A_1 + L_2 + A_2,
$ instead of the correct outcome model ($K=2$). The settings vary total sample size $n$ and cohort size $b$.}
\label{tab:k2_ltmle_sample_size_misspecification_sensitivity}\par
\centering
\resizebox{0.98\linewidth}{!}{
\begin{tabular}{|llrrrrrrr|}
\hline
$w$ & Setting & TarVar & Var$_1$ & Var$_2$ & Cov$_1$(\%) & Cov$_2$(\%) & OCov$_1$(\%) & OCov$_2$(\%) \\
\hline
$(1/2,\,1/2)$ & $n=1000,\ b=200$ & 0.041 & 0.049 & 0.033 & 94.4 & 93.4 & 95.0 & 94.2 \\
$(1/2,\,1/2)$ & $n=2000,\ b=200$ & 0.019 & 0.022 & 0.017 & 94.8 & 95.6 & 94.6 & 95.4 \\
$(1/2,\,1/2)$ & $n=2000,\ b=400$ & 0.021 & 0.026 & 0.017 & 92.8 & 95.0 & 94.6 & 95.6 \\
$(1/2,\,1/2)$ & $n=4000,\ b=800$ & 0.010 & 0.011 & 0.008 & 95.2 & 95.0 & 95.0 & 96.0 \\
\hline
$(1/3,\,2/3)$ & $n=1000,\ b=200$ & 0.038 & 0.052 & 0.031 & 93.8 & 93.6 & 95.0 & 94.2 \\
$(1/3,\,2/3)$ & $n=2000,\ b=200$ & 0.018 & 0.023 & 0.015 & 94.6 & 94.6 & 94.4 & 94.8 \\
$(1/3,\,2/3)$ & $n=2000,\ b=400$ & 0.019 & 0.028 & 0.014 & 92.8 & 95.0 & 95.4 & 95.0 \\
$(1/3,\,2/3)$ & $n=4000,\ b=800$ & 0.009 & 0.012 & 0.007 & 96.0 & 94.8 & 95.6 & 95.6 \\
\hline
$(2/3,\,1/3)$ & $n=1000,\ b=200$ & 0.043 & 0.047 & 0.037 & 94.6 & 94.0 & 94.8 & 94.6 \\
$(2/3,\,1/3)$ & $n=2000,\ b=200$ & 0.020 & 0.021 & 0.019 & 94.4 & 94.8 & 94.2 & 95.8 \\
$(2/3,\,1/3)$ & $n=2000,\ b=400$ & 0.024 & 0.026 & 0.018 & 94.0 & 94.2 & 95.4 & 95.2 \\
$(2/3,\,1/3)$ & $n=4000,\ b=800$ & 0.010 & 0.011 & 0.009 & 96.4 & 95.4 & 95.2 & 96.2 \\
\hline
\end{tabular}
}
\end{table}

%% file: table_K3_ltmle_sample_size_sensitivity.tex
\begin{table}[H]
\caption{\small Complete results of ADL-LTMLE using the correct strata-specific mean estimates ($K=3$).
The settings vary total sample size $n$ and cohort size $b$.}
\label{tab:k3_ltmle_sample_size_sensitivity}\par
\centering
\resizebox{0.98\linewidth}{!}{
\begin{tabular}{|llrrrrrrrrrr|}
\hline
$w$ & Setting & TarVar & Var$_1$ & Var$_2$ & Var$_3$ & Cov$_1$(\%) & Cov$_2$(\%) & Cov$_3$(\%) & OCov$_1$(\%) & OCov$_2$(\%) & OCov$_3$(\%) \\
\hline
$(1/3,\,1/3,\,1/3)$ & $n=1000,\ b=200$ & 0.052 & 0.042 & 0.061 & 0.055 & 93.4 & 94.8 & 92.4 & 95.0 & 95.8 & 95.2 \\
$(1/3,\,1/3,\,1/3)$ & $n=2000,\ b=200$ & 0.025 & 0.020 & 0.029 & 0.025 & 94.2 & 95.0 & 93.6 & 95.0 & 95.4 & 95.2 \\
$(1/3,\,1/3,\,1/3)$ & $n=2000,\ b=400$ & 0.026 & 0.020 & 0.031 & 0.026 & 95.2 & 94.4 & 95.2 & 95.2 & 95.0 & 95.4 \\
$(1/3,\,1/3,\,1/3)$ & $n=4000,\ b=800$ & 0.013 & 0.010 & 0.015 & 0.012 & 95.0 & 95.2 & 94.6 & 95.4 & 95.0 & 94.6 \\
\hline
$(1/6,\,1/3,\,1/2)$ & $n=1000,\ b=200$ & 0.052 & 0.049 & 0.060 & 0.048 & 93.4 & 92.8 & 92.2 & 94.6 & 93.2 & 93.8 \\
$(1/6,\,1/3,\,1/2)$ & $n=2000,\ b=200$ & 0.024 & 0.024 & 0.027 & 0.021 & 93.4 & 94.8 & 94.2 & 93.4 & 94.8 & 94.6 \\
$(1/6,\,1/3,\,1/2)$ & $n=2000,\ b=400$ & 0.024 & 0.024 & 0.028 & 0.022 & 94.4 & 95.6 & 95.2 & 94.8 & 95.0 & 94.8 \\
$(1/6,\,1/3,\,1/2)$ & $n=4000,\ b=800$ & 0.012 & 0.012 & 0.014 & 0.011 & 94.6 & 97.0 & 95.8 & 95.0 & 96.8 & 96.2 \\
\hline
$(1/2,\,1/3,\,1/6)$ & $n=1000,\ b=200$ & 0.048 & 0.035 & 0.062 & 0.059 & 95.8 & 96.4 & 93.4 & 95.8 & 96.4 & 94.8 \\
$(1/2,\,1/3,\,1/6)$ & $n=2000,\ b=200$ & 0.023 & 0.017 & 0.029 & 0.029 & 94.8 & 94.8 & 94.6 & 95.2 & 94.2 & 95.8 \\
$(1/2,\,1/3,\,1/6)$ & $n=2000,\ b=400$ & 0.024 & 0.017 & 0.031 & 0.029 & 94.8 & 94.2 & 94.6 & 95.4 & 94.4 & 95.4 \\
$(1/2,\,1/3,\,1/6)$ & $n=4000,\ b=800$ & 0.012 & 0.009 & 0.015 & 0.014 & 94.8 & 96.4 & 94.8 & 95.4 & 95.8 & 95.0 \\
\hline
\end{tabular}
}
\end{table}

\begin{table}[H]
\caption{\small Complete results of ADL-LTMLE using the misspecified working outcome regression $
Y \sim L_1 + A_1 + A_2 + A_3,
$ instead of the correct outcome model ($K=3$). The settings vary total sample size $n$ and cohort size $b$.}
\label{tab:k3_ltmle_sample_size_misspecification_sensitivity}\par
\centering
\resizebox{0.98\linewidth}{!}{
\begin{tabular}{|llrrrrrrrrrr|}
\hline
$w$ & Setting & TarVar & Var$_1$ & Var$_2$ & Var$_3$ & Cov$_1$(\%) & Cov$_2$(\%) & Cov$_3$(\%) & OCov$_1$(\%) & OCov$_2$(\%) & OCov$_3$(\%) \\
\hline
$(1/3,\,1/3,\,1/3)$ & $n=1000,\ b=200$ & 0.065 & 0.048 & 0.069 & 0.078 & 94.0 & 95.6 & 94.6 & 94.8 & 96.0 & 95.8 \\
$(1/3,\,1/3,\,1/3)$ & $n=2000,\ b=200$ & 0.032 & 0.024 & 0.032 & 0.040 & 94.6 & 94.8 & 94.2 & 95.4 & 95.4 & 94.6 \\
$(1/3,\,1/3,\,1/3)$ & $n=2000,\ b=400$ & 0.034 & 0.025 & 0.036 & 0.041 & 94.2 & 94.8 & 93.0 & 94.4 & 95.4 & 94.0 \\
$(1/3,\,1/3,\,1/3)$ & $n=4000,\ b=800$ & 0.017 & 0.013 & 0.018 & 0.019 & 93.0 & 94.2 & 95.8 & 95.6 & 95.4 & 96.2 \\
\hline
$(1/6,\,1/3,\,1/2)$ & $n=1000,\ b=200$ & 0.066 & 0.057 & 0.068 & 0.067 & 95.6 & 95.2 & 95.0 & 95.6 & 95.6 & 94.6 \\
$(1/6,\,1/3,\,1/2)$ & $n=2000,\ b=200$ & 0.032 & 0.029 & 0.032 & 0.033 & 93.8 & 94.6 & 94.4 & 94.6 & 95.0 & 94.8 \\
$(1/6,\,1/3,\,1/2)$ & $n=2000,\ b=400$ & 0.034 & 0.031 & 0.035 & 0.034 & 93.0 & 94.8 & 93.6 & 94.2 & 95.4 & 94.0 \\
$(1/6,\,1/3,\,1/2)$ & $n=4000,\ b=800$ & 0.017 & 0.015 & 0.017 & 0.017 & 93.0 & 95.4 & 96.2 & 94.0 & 96.2 & 96.2 \\
\hline
$(1/2,\,1/3,\,1/6)$ & $n=1000,\ b=200$ & 0.056 & 0.039 & 0.068 & 0.084 & 96.0 & 96.8 & 94.6 & 96.0 & 96.2 & 95.0 \\
$(1/2,\,1/3,\,1/6)$ & $n=2000,\ b=200$ & 0.028 & 0.019 & 0.031 & 0.046 & 95.0 & 95.6 & 94.4 & 94.8 & 95.0 & 94.4 \\
$(1/2,\,1/3,\,1/6)$ & $n=2000,\ b=400$ & 0.031 & 0.021 & 0.037 & 0.047 & 94.4 & 93.8 & 93.6 & 95.2 & 94.2 & 93.8 \\
$(1/2,\,1/3,\,1/6)$ & $n=4000,\ b=800$ & 0.015 & 0.011 & 0.018 & 0.022 & 93.6 & 95.6 & 94.6 & 93.8 & 95.4 & 95.0 \\
\hline
\end{tabular}
}
\end{table}